\keywords{Enumeration, First-Order, Low degree}
\newcommand\fun[1]{\textit{#1}\xspace}
\newcommand\e{\epsilon}
\newenvironment{mi}{\begin{itemize}}{\end{itemize}}
\newenvironment{me}{\begin{enumerate}[(1)]}{\end{enumerate}}
\renewcommand{\nc}{\newcommand}
\nc{\rnc}{\renewcommand}
\nc{\myparagraph}[1]{\paragraph{\bfseries #1.}}
\nc{\twodots}{.\,.\,}
\nc{\bs}[1]{\ensuremath{\boldsymbol{#1}}}
\rnc{\epsilon}{\varepsilon}
\rnc{\rho}{\varrho}
\nc{\ov}[1]{\ensuremath{\bar{#1}}}
\nc{\vecb}[1]{\ensuremath{\overleftarrow{#1}}} % b for "backwards"
\nc{\vecf}[1]{\ensuremath{\overrightarrow{#1}}} % f for "forward"
\nc{\deff}{:=}
\nc{\und}{\ensuremath{\wedge}}
\nc{\Und}{\ensuremath{\bigwedge}}
\nc{\oder}{\ensuremath{\vee}}
\nc{\Oder}{\ensuremath{\bigvee}}
\nc{\nicht}{\ensuremath{\neg}}
\nc{\impl}{\ensuremath{\rightarrow}}
\nc{\gdw}{\ensuremath{\leftrightarrow}}
\nc{\true}{\ensuremath{\textit{true}}}
\nc{\false}{\ensuremath{\textit{false}}}
\nc{\ar}{\ensuremath{\textit{ar}}}
\nc{\qr}{\ensuremath{\textit{qr}}}
\nc{\free}{\ensuremath{\textit{free}}}
\nc{\size}[1]{\ensuremath{|\!|#1|\!|}}
\rnc{\geq}{\ensuremath{\geqslant}}
\rnc{\leq}{\ensuremath{\leqslant}}
\nc{\set}[1]{\ensuremath{\{ #1 \}}}
\nc{\setc}[2]{\set{#1 \,:\, #2}}
\nc{\bigset}[1]{\ensuremath{ \left\{ #1 \right\} } }
\nc{\bigsetc}[2]{\bigset{#1 \,:\, #2}}
\nc{\dom}{\ensuremath{\textit{dom}}}
\nc{\img}{\ensuremath{\textit{img}}}
\nc{\abgerundet}[1]{\ensuremath{
    \left\lfloor {#1} \right\rfloor }}
\nc{\aufgerundet}[1]{\ensuremath{\left\lceil {#1} \right\rceil }}
\nc{\NN}{\ensuremath{\mathbb{N}}}   % N f\"ur nat\"urliche Zahlen
\nc{\NNpos}{\ensuremath{\NN_{\mbox{\tiny $\scriptscriptstyle
        \geq 1$}}}}
\nc{\ZZ}{\ensuremath{\mathbb{Z}}}
\nc{\RR}{\ensuremath{\mathbb{R}}}
\nc{\RRpos}{\ensuremath{\RR_{\mbox{\tiny $\scriptscriptstyle > 0$}}}}
\nc{\QQ}{\ensuremath{\mathbb{Q}}}
\nc{\QQpos}{\ensuremath{\QQ_{\mbox{\tiny $\scriptscriptstyle > 0$}}}}
\nc{\INT}{\textup{int}}
\nc{\intoo}[1]{\ensuremath{\INT(#1)}}
\nc{\intoc}[1]{\ensuremath{\INT(#1]}} % chktex 9
\nc{\intco}[1]{\ensuremath{\INT[#1)}} % chktex 9
\nc{\intcc}[1]{\ensuremath{\INT[#1]}}
\nc{\into}[1]{\intoo{#1}}
\nc{\intc}[1]{\intcc{#1}}
\nc{\Graphs}{\ensuremath{\mathscr{G}}}
\nc{\dist}{\ensuremath{\textit{dist}}}
\nc{\degree}{\ensuremath{\textit{degree}}}
\nc{\dd}{\ensuremath{d }}
\nc{\dummy}{\ensuremath{v_{\bot}}}
\nc{\A}{\ensuremath{\mathcal{A}}\xspace}
\nc{\Aq}{\ensuremath{\A\!\!\downarrow\!\!q}\xspace}
\nc{\B}{\ensuremath{\mathcal{B}}\xspace}
\nc{\G}{\ensuremath{\mathcal{G}}\xspace}
\nc{\N}{\ensuremath{\mathcal{N}}} % N for "neighbourhood"
\nc{\Class}{\ensuremath{C}\xspace} % C for "class of structures"
\nc{\DataStructureFont}[1]{\ensuremath{\texttt{#1}}}
\nc{\AL}{\DataStructureFont{AL}} % for "adjacency list"
\nc{\DM}{\DataStructureFont{DM}} % for "distance matrix"
\nc{\NA}{\DataStructureFont{NA}} % for "neighbourhood array"
\nc{\DA}{\DataStructureFont{DA}} % for "distance array"
\nc{\NCList}{\DataStructureFont{NC}} % for "neighborhood list of colour"
\nc{\CList}{\DataStructureFont{C}}   % for "list of colour"
\nc{\Skip}{\DataStructureFont{Skip}} % for "skip pointers"
\nc{\First}{\DataStructureFont{First}}
\nc{\Next}{\DataStructureFont{Next}}
\nc{\Last}{\DataStructureFont{Last}}
\nc{\nil}{\DataStructureFont{NIL}}
\nc{\R}{\ensuremath{{\hat{r}}}} % "larger radius"
\nc{\class}[1]{\ensuremath{\textrm{\upshape{#1}}}}
\nc{\FO}{\class{FO}\xspace}
\nc{\ACzero}{\ensuremath{\class{AC}^0}}
\nc{\ACO}{\ACzero}
\nc{\CDal}{\ensuremath{\textsc{Constant-Delay}_{\textup{almost-lin}}}}
\nc{\Algo}{\ensuremath{\mathfrak{A}}}
\nc{\Problem}[1]{\ensuremath{\textsc{#1}}}
\nc{\Part}{\ensuremath{\mathcal{P}}} % for the set of all partitions
\rnc{\next}{\ensuremath{\textit{next}}}
\nc{\first}{\ensuremath{\textit{first}}}
\nc{\void}{\ensuremath{\textit{void}}\xspace}
\nc{\OMIT}[1]{}
\newcommand\todo[1]{\
  \newline\noindent\bigskip\framebox{\parbox{\columnwidth}{\tt #1}}\bigskip}
\nc{\luc}[1]{\todo{\textbf{Luc:} #1}}
\nc{\arnaud}[1]{\todo{\textbf{Arnaud:} #1}}
\nc{\nicole}[1]{\todo{\textbf{Nicole:} #1}}
\apptocmd{\sloppy}{\hbadness 10000\relax}{}{} % chktex 1
\begin{document}

\title[Enumeration of FO queries over low degree databases]{Enumerating Answers to First-Order Queries over \texorpdfstring{\\}{} Databases of Low Degree}

\author[A.~Durand]{Arnaud Durand\rsuper{a}}
\address{IMJ-PRG, Universit\'e de Paris, CNRS}
\author[N.~Schweikardt]{Nicole Schweikardt\rsuper{b}}
\address{Humboldt-Universit\"at zu Berlin}
\author[L.~Segoufin]{Luc Segoufin\rsuper{c}}
\address{INRIA, Laboratoire Cogitamus}

%\date{\today}

\begin{abstract}
  A class of relational databases has low degree if for all $\delta>0$, all
  but finitely many databases in the class have degree at most
  $n^{\delta}$, where $n$ is the size of the database. Typical examples are
  databases of bounded degree or of degree bounded by $\log n$.

  It is known that over a class of databases having low degree, first-order
  boolean queries can be checked in pseudo-linear time, i.e.\ for all $\epsilon>0$
  in time bounded by $n^{1+\epsilon}$. We generalize this result by
  considering query evaluation.

  We show that counting the number of answers to a query can be done in
  pseudo-linear time and that after a pseudo-linear time preprocessing we can
  test in constant time whether a given tuple is a solution to a query or
  enumerate the answers to a query with constant delay.
\end{abstract}

\maketitle

\section{Introduction}\label{section-introduction}

Query evaluation is a fundamental task in databases and a vast literature is
devoted to the complexity of this problem.  However, for more demanding tasks
such as producing the whole set of answers or computing aggregates on the query
result (such as counting the number of answers), complexity bounds are often
simply extrapolated from those for query evaluation; and until recently, few
specific methods and tools had been developed to tackle these problems.  Given
a database $\A$ and a first-order query $q$, it may be not satisfactory enough
to express complexity results in terms of the sizes of $\A$ and $q$ as it is
often the case. The fact that the solution set $q(\A)$ may be of size
exponential in the query is intuitively not sufficient to make the problem
hard, and alternative complexity measures had to be found for query
answering. In this direction, one way to define tractability is to assume that
tuples of the query result can be generated one by one with some regularity,
for example by ensuring a fixed delay between two consecutive outputs once a
necessary precomputation has been done to construct a suitable index structure.
This approach, that considers query answering as an enumeration problem, has
deserved some attention over the last few years.  In this vein, the best that
one can hope for is constant delay, i.e., the delay depends only on the size of
$q$ (but not on the size of $\A$).  Surprisingly, a number of query evaluation
problems have been shown to admit constant delay algorithms, usually preceded
by a preprocessing phase that is linear or almost linear. This is the case when
queries are evaluated over the class of structures of bounded
degree~\cite{DurandG07, KS11}, over the class of structures of ``bounded
expansion''~\cite{KazanaS13} and, more generally, over the class of nowhere
dense structures~\cite{DBLP:conf/pods/SchweikardtSV18}. Similar results have
been shown for monadic second-order logic over structures of bounded
tree-width~\cite{Courcelle09,Bagan06,WL12} or for fragments of first-order
logic over arbitrary structures~\cite{BDG07,Brault-Baron12}.  However, as shown
in~\cite{BDG07}, the fact that evaluation of boolean queries is easy does not
guarantee the existence of such efficient enumeration algorithms in general:
under some reasonable complexity assumption, there is no constant delay
algorithm with linear preprocessing enumerating the answers of acyclic
conjunctive queries (although it is well-known that the model checking of
boolean acyclic queries can be done in linear time~\cite{Yannakakis81}).

In this paper, we investigate the complexity of the enumeration, counting, and
testing problems for first-order queries over classes of low degree. A class of
relational databases has low degree if for all $\delta>0$, all sufficiently
large databases in the class have degree at most $n^{\delta}$, where $n$ is the
size of the database. Databases of bounded degree or of degree bounded by
$(\log n)^c$, for any fixed constant $c$, are examples of low degree
classes. However, it turns out to be incomparable with the class of nowhere
dense databases mentioned above.

It has been proved in~\cite{Grohe-STACS01} that over a class of databases of
low degree, first-order boolean queries can be checked in pseudo-linear time,
i.e., for all $\epsilon>0$ there is an algorithm running in time bounded by
$O(n^{1+\epsilon})$ checking the given first-order query. In this paper, we
prove that counting the number of answers to a query can be done in
pseudo-linear time, and that enumerating the answers to a query can be done
with constant delay after a pseudo-linear time preprocessing. We also prove
that testing membership of a tuple to a query result can be done in constant
time after a pseudo-linear time preprocessing.  We adopt a uniform approach to
prove all these results by using at the heart of the preprocessing phases a
quantifier elimination method that reduces our different problems to their
analog but for colored graphs and quantifier-free queries. With such a tool, we
can then focus within each specific task on very simple instances.

Over a class of databases of low degree, the
difficulty is to handle queries requiring that in all its answers, some of its
components are far away from each other. When this is not the case, for
instance when in all answers all its components are within short distance from
the first component, then the low degree assumption implies that there are only
few answers in total and those can be computed in pseudo-linear time. In the
difficult case, the number of answers may be exponential in the arity of the
query and the naive evaluation algorithm may spend too much time processing
tuples with components close to each other. To avoid this situation, we
introduce suitable functions that can be precomputed in pseudo-linear time, and
that allow us to jump in constant time from a tuple with components close to
each other to a correct answer.

%\paragraph{Related work}
\textbf{Related work.} \
Enumerating the answers to a boolean query $q$ over a database \A is more general than
testing whether $q$ holds on \A, a problem also known as the model checking
problem. An enumeration algorithm with constant delay after a preprocessing
phase taking %linear time or
pseudo-linear time, or even polynomial time,
induces a model checking algorithm that is \emph{fixed-parameter
  tractable} (FPT), i.e, works in time $f(q){\cdot}\size{\A}^c$ for some constant $c$
and some function $f$ depending only on the class of databases.  There is a
vast literature studying the model checking problem for first-order logic
aiming at finding FPT algorithms for larger and larger classes of databases.
Starting from classes of databases of bounded degree, or bounded
treewidth, FPT algorithms were derived for classes of databases having bounded
expansion~\cite{DvorakKralThomas10} (see also~\cite{KazanaS13}). Actually, recently an FPT algorithm has been
obtained for classes of databases known as
``nowhere dense'', generalizing all the previously known results~\cite{GKS13}.
This last result is in a sense ``optimal'' as it is known that if a class of
databases is closed under substructures and has no FPT model checking algorithm
then it is somewhere dense~\cite{KD09}, modulo some reasonable complexity
hypothesis.

Classes of databases of low degree do not belong to this setting. It is easy to
see that they are neither nowhere dense nor closed under substructures (see
Section~\ref{subsection:LowDegree}).  Our algorithms build on the known model
checking algorithm for low degree databases~\cite{Grohe-STACS01}. They
generalize the known enumeration algorithms for databases of bounded
degree~\cite{DurandG07, KS11}.

This paper is the journal version of~\cite{DBLP:conf/pods/DurandSS14}. There
is an important difference with the conference version. In the conference version
we needed the extra hypothesis that even though we would use only a memory of
pseudo-linear size, a total amount of memory of quadratic size was necessary
for our algorithms to work. This extra memory is no longer necessary here
thanks to the data structure constructed in
Theorem~\ref{thm-storing-complete}. This makes the technical lemma slightly
more complicated to state, but does not affect the general results.

\medskip
%\paragraph{Organisation}
\textbf{Organization.} \
We fix the basic notation and formulate our main results in
Section~\ref{section-prelim}.
In Section~\ref{section-main} we present the algorithms for counting,
testing, and enumerating answers to first-order queries over classes of
structures of low degree. These algorithms rely on a particular
preprocessing which transforms a first-order query on a database
into a quantifier-free query on a colored graph. The
result is stated in Section~\ref{subsection:qeli}, while its proof is presented in Section~\ref{section-main-proofs}.
We conclude in Section~\ref{section-conclusion}.

\section{Preliminaries and Main Results}\label{section-prelim}

We write $\NN$ to denote the set of non-negative integers, and we let
$\NNpos\deff\NN\setminus\set{0}$.
$\QQ$ denotes the set of  rationals, and $\QQpos$ is the set of positive  rationals.
%$\RR$ (resp. $\QQ$) denotes the set of reals (resp. rationals), and $\RRpos$ (resp. $\QQpos$) is the set of positive reals (resp. rationals).

\subsection{Databases and queries}
A database is a finite relational structure.  A \emph{relational signature} $\sigma$ is
a finite set of relation symbols $R$, each of them associated with a fixed
\emph{arity} $\ar(R)\in\NNpos$. A \emph{relational structure} \A over $\sigma$, or a
$\sigma$-structure (we omit to mention $\sigma$ when it is clear from the
context) consists of a non-empty finite set $\dom(\A)$
called the \emph{domain} of \A, and an $\ar(R)$-ary relation $R^\A\subseteq
\dom(\A)^{\ar(R)}$ for each relation symbol $R\in\sigma$.

The degree of a structure \A, denoted $\degree(\A)$, is the degree of the
Gaifman graph associated with \A (i.e., the undirected graph with vertex set
$\dom(\A)$ where there is an edge between two nodes if they both occur in a
tuple that belongs to a relation of \A).
With this definition, in a structure with
$n$ domain elements and of degree $d$, each $r$-ary relation may have
at most $n\cdot (d{+}1)^{r-1}$ tuples.

In the sequel we only consider structures of degree $\geq 2$. As structures
of degree~1 are quite trivial, this is without loss of generality.

We define the \emph{size} $\size{\A}$ of $\A$ as $\size{\A}=
|\sigma|+|\dom(\A)|+\sum_{R\in\sigma} |R^{\A}|{\cdot} \ar(R)$. It corresponds
to the size of a reasonable
encoding of \A.
We assume that the input structure \A is presented in a
  way such that given a relation symbol $R$ in the signature we can directly access
  the list of tuples in $R^\A$ (i.e.\ without reading the remaining
  tuples).
The cardinality of \A, i.e.\ the cardinality
of its domain, is denoted by $|\A|$.

By \emph{query} we mean a formula of $\FO(\sigma)$, the set of all first-order
formulas of signature $\sigma$, for some relational signature $\sigma$ (again
we omit $\sigma$ when it is clear from the context). For $\varphi\in
\FO$, we write $\varphi(\bar x)$ to denote a query whose free variables
are $\bar x$, and the number of free variables is called the \emph{arity of the
  query}.  A \emph{sentence} is a query of arity 0.  Given a structure \A and a
query $\varphi$, an \emph{answer} to $\varphi$ in \A is a tuple $\bar a$ of
elements of $\dom(\A)$ such that $\A \models \varphi(\bar a)$.
In the special case where $\varphi$ is a sentence, it is either true of
false in \A, and the former is denoted $\A \models \varphi$ and the
latter is denoted $\A\not\models\varphi$.
We write $\varphi(\A)$
for the set of answers to $\varphi$ in \A, i.e. $\varphi(\A)=\setc{\bar a}{\A
  \models \varphi(\bar a)}$. As usual, $|\varphi|$ denotes the size of
$\varphi$.

Let \Class be a class of structures. The model checking problem of \FO over \Class is
the computational problem of given a {\bf sentence} $\varphi\in\FO$ and a database
$\A\in\Class$ to test whether $\A\models \varphi$. % or not.

Given a $k$-ary query $\varphi$, we care about ``enumerating'' $\varphi(\A)$ efficiently. Let
\Class be a class of structures. The \emph{enumeration
  problem of $\varphi$ over \Class} is, given a database $\A\in\Class$, to output the
elements of $\varphi(\A)$ one by one with no repetition. The time needed to output
the first solution is called the \emph{preprocessing time}. The maximal time
between any two consecutive outputs of elements of $\varphi(\A)$ is called \emph{the
  delay}. We are interested here in enumeration algorithms with pseudo-linear
preprocessing time and constant delay. We now make these notions formal.

\subsection{Model of computation and enumeration}\label{subsection:RAM}
We use Random Access Machines (RAMs) with addition and uniform cost measure as
a model of computation. For further details on this model and its use in logic
see~\cite{FlumGrohe-ParameterizedComplexity,GrandjeanOlive04}.

In the sequel we assume that the input relational structure comes with a linear
order on the domain. If not, we use the one induced by the encoding of the
structure as an input to the RAM\@.
Whenever we iterate through all nodes of the domain, the iteration is with
respect to the initial linear order. The linear order induces a lexicographical
order on tuples of elements of the domain.

Our algorithms over RAMs will take as input a query $\varphi$ of size $k$ and a
structure \A of size $n$.  We adopt the data complexity point of view and say that a problem can be solved in \emph{linear
  time} (respectively, \emph{constant time}) if it can be solved by an
algorithm outputting the solution within $f(k){\cdot} n$ steps (respectively, $f(k)$ steps), for some function $f$. We
also say a problem can be solved in \emph{pseudo-linear time} if, for all
$\epsilon\in \QQpos$, there is an algorithm solving it within $f(k,\epsilon){\cdot}
n^{1+\epsilon}$ steps, for some function $f$.

We will often compute partial $k$-ary functions $f$ associating a value to a tuple
of nodes of the input graph. Such functions can be easily implemented in the
RAM model using $k$-dimensional cubes allowing to retrieve the value of $f$ in
constant time.  This requires a memory usage of $O(n^k)$ and an initialization
process of $O(n^k)$.  However our functions will have a domain of size
pseudo-linear and can be computed in pseudo-linear time.  The following theorem
states that we can use the RAM model to build a data structure that stores our
functions in a more efficient way.
{The data structure is a trie  of depth $\frac{1}{\e}$ and of
  degree $n^\e$ where each pair (key,value) is a tuple $\bar a$ in the domain of $f$ and its
  image $b=f(\bar a)$.  The details can be found
  in~\cite{DBLP:conf/pods/SchweikardtSV20}.}

\begin{thm}[Storing Theorem]\label{thm-storing-complete}
  For every fixed $n,k\in\NN$ and $\epsilon>0$, there is a data
  structure that stores the value of a $k$-ary function $f$ of domain
  $\dom(f)\subseteq [n]^k$ with:
	\begin{itemize}
		\item computation time and storage space $O(|\dom(f)|\cdot n^{\e})$,
		\item lookup time only depending on $k$ and $\e$,
	\end{itemize}
	Here, lookup means that given a tuple
$\bar a\in [n]^k$,
the algorithm either answers $b$ if $\bar a\in\dom(f)$ and $f(\bar a)=b$, or \void otherwise.
\end{thm}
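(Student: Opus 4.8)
The plan is to realize exactly the trie sketched above and to charge its cost carefully. First I would normalise the alphabet: after possibly replacing $\epsilon$ by a smaller value with $1/\epsilon\in\NN$ (which only strengthens the bound, since $n^{\epsilon}$ decreases) and enlarging $n$ to the next value of the form $(\lceil n^\epsilon\rceil)^{1/\epsilon}$ (which changes $n$ only by a constant factor depending on $\epsilon$, hence does not affect the asymptotics), each coordinate $a_i\in[n]$ can be written with exactly $1/\epsilon$ digits in base $m\deff\lceil n^\epsilon\rceil$. Concatenating the digits of all $k$ coordinates turns every key $\bar a\in\dom(f)$ into a string of length $L\deff k/\epsilon$ over the alphabet $[m]$, and distinct tuples yield distinct strings of this common length $L$.

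Next I would build a trie for these strings. Each node carries an array of $m=O(n^\epsilon)$ child pointers, initialised to \nil, so that, given a digit in $[m]$, the corresponding child can be found (or created) in constant time; each node at depth $L$ stores the value $b=f(\bar a)$. Because all keys share the length $L$, every value sits at depth exactly $L$ and no key is a prefix of another, which avoids the usual end-of-word bookkeeping. Lookup of a tuple $\bar a$ then simply follows its $L$ digits from the root: if at some level the indicated child pointer is \nil we return \void, and otherwise we read off the value stored at the depth-$L$ node we reach. This traverses $L$ levels with $O(1)$ work each, so the lookup time is $O(k/\epsilon)$, depending only on $k$ and $\epsilon$, as required.

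It remains to bound construction time and space. The key combinatorial observation is that the trie only materialises nodes lying on a root-to-leaf path of some stored key, so inserting the $|\dom(f)|$ keys creates at most $|\dom(f)|\cdot L+1$ nodes. Allocating and initialising the child array of a freshly created node costs $O(m)=O(n^\epsilon)$, while following or creating a single edge costs $O(1)$; hence the total construction time and the total storage are both
\[
O\big(|\dom(f)|\cdot L\cdot n^\epsilon\big)\;=\;O\big(|\dom(f)|\cdot n^\epsilon\big)
\]
for fixed $k$ and $\epsilon$, matching the claim.

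The step I expect to be the main obstacle — or at least the point that must be argued rather than waved away — is precisely this accounting of the per-node arrays. The naive $k$-dimensional cube spends $O(n^k)$ merely on initialisation, and one must be sure that giving every trie node a full array of size $n^\epsilon$ does not secretly reintroduce an $n^k$-type blow-up. It does not, because the number of nodes is linear in $|\dom(f)|$ (up to the constant factor $L=k/\epsilon$) rather than proportional to the size $n^k$ of the universe of possible keys; the degree $n^\epsilon$ is paid only on the $O(|\dom(f)|\cdot k/\epsilon)$ nodes that actually occur. The remaining technical point to dispatch is the rounding involved in writing elements of $[n]$ in base $\lceil n^\epsilon\rceil$ with an integral number of digits, which is handled by the normalisation carried out in the first step.
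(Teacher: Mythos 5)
Your proposal is correct and follows exactly the construction the paper itself indicates: the paper gives only a one-sentence sketch of a trie of degree $n^{\epsilon}$ keyed by the tuples in $\dom(f)$ and defers the details to the cited reference, and your base-$\lceil n^{\epsilon}\rceil$ digit encoding, per-node child arrays, and the accounting of $O(|\dom(f)|\cdot k/\epsilon)$ materialised nodes each costing $O(n^{\epsilon})$ are precisely those details. The lookup and construction bounds you derive match the statement, so no further comparison is needed.
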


An important consequence of Theorem~\ref{thm-storing-complete} is that, modulo
a preprocessing in time pseudo-linear in the size of the database, we can test
in constant time whether an input tuple is a fact of the database:

\begin{cor}\label{cor-test}
  Let \A be a database over the schema $\sigma$. Let $\e >0$. One can
  compute in time $O(d^r n^{1+\e})$ a
  data structure such that on input of a tuple $\bar a$ and a relation symbol $R
  \in \sigma$ one can test whether  $\A \models R(\bar a)$ in time $O(1)$, where
  $n=|\dom(\A)|$, $d=\degree(\A)$, and $r$ is a number that only
  depends on $\sigma$.
\end{cor}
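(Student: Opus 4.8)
The plan is to apply the Storing Theorem (Theorem~\ref{thm-storing-complete}) once for each relation symbol $R\in\sigma$, encoding the membership relation $R^\A$ as the domain of a partial function. First I would fix $\e>0$ and set $r\deff\max_{R\in\sigma}\ar(R)$, a number depending only on $\sigma$. For each $R\in\sigma$ of arity $s=\ar(R)$, I define the partial function $f_R\colon [n]^s\to\set{1}$ whose domain is exactly $R^\A$ (so $f_R(\bar a)=1$ iff $\A\models R(\bar a)$). The key quantitative point is that, by the degree bound recorded in the preliminaries, a structure with $n$ domain elements and degree $d$ has $|R^\A|\leq n\cdot(d{+}1)^{s-1}\leq n\, (d{+}1)^{r-1}$ tuples, so $|\dom(f_R)|=|R^\A|=O(d^{r-1}\, n)$.

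Next I would invoke Theorem~\ref{thm-storing-complete} for each $f_R$ with the same parameter $\e$. Since $|\dom(f_R)|=O(d^{r-1} n)$, the theorem yields computation time and storage space $O\bigl(|\dom(f_R)|\cdot n^{\e}\bigr)=O(d^{r-1} n^{1+\e})$, and lookup time depending only on $s$ (hence on $\e$ and $\sigma$), i.e.\ $O(1)$ in data complexity. Summing over the finitely many relation symbols $R\in\sigma$ gives a total preprocessing time of $O(d^{r} n^{1+\e})$ (the extra factor of $d$ is a convenient overestimate absorbing $|\sigma|$ and the $d^{r-1}$ bound; any uniform bound of this shape suffices), which matches the claimed bound. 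To build the data structures I need to actually read the tuples of each $R^\A$, but by the input-access assumption stated in the preliminaries we can directly access the list of tuples of $R^\A$, so this reading costs $O(\sum_R |R^\A|\cdot\ar(R))=O(\size{\A})$, subsumed by the preprocessing bound.

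For the query phase, on input of a tuple $\bar a$ and a relation symbol $R$, I would look up $R$ among the finitely many stored structures (constant time, as $|\sigma|$ is a constant) and perform a single lookup in the data structure for $f_R$: the lookup returns $1$ exactly when $\bar a\in\dom(f_R)=R^\A$, i.e.\ when $\A\models R(\bar a)$, and returns \void otherwise. By the Storing Theorem this lookup runs in time $O(1)$ in data complexity, as required.

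I do not expect any serious obstacle here, since the corollary is essentially a direct application of Theorem~\ref{thm-storing-complete} combined with the elementary degree bound on relation sizes; the only point requiring mild care is bookkeeping the dependence on $d$ and $r$ to land on the stated $O(d^r n^{1+\e})$ preprocessing bound, and confirming that the tuples of a given relation can be enumerated within this budget, which follows from the input-representation assumption.
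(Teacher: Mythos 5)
Your proposal is correct and follows exactly the paper's (one-line) argument: apply the Storing Theorem to the indicator function of each relation $R^\A$, using the bound $|R^\A|\leq (d{+}1)^{\ar(R)-1}n$ to control the preprocessing cost. The extra bookkeeping you supply is all sound and fills in exactly the details the paper leaves implicit.
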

\begin{proof}
 Immediate from Theorem~\ref{thm-storing-complete}, as the number of
 tuples in an $r$-ary relation of
  a $\sigma$-structure whose Gaifman graph has degree $d$ is at most $(d{+}1)^{r-1}n$.
\end{proof}

Note that a simple linear time preprocessing would provide a data structure
allowing for a test as in Corollary~\ref{cor-test} in time $O(d)$
{(to see how, just think of the special case where $\A$ only
  contains one binary relation $R$. In this case, the preprocessing can
  build an adjacency list representation where upon input of an $a$ we
  can access in time $O(1)$ the first element of a list of all those
  $b$ satisfying $R(a,b)$. This list has length at most $d$. Upon input of a tuple $(a,b)$ we
  access $a$'s adjacency list and check if it contains $b$).}
{With
  the help of the
Storing Theorem we get a test in constant time,}
i.e.\ depending only on $\e$ and $\sigma$ and not on $d$.

We say that the \emph{enumeration problem} of \FO over a class \Class of structures
can be solved with \emph{constant delay after a pseudo-linear preprocessing}, if
it can be solved by a RAM algorithm which, on input $\e >0$, $q\in\FO$ and $\A\in\Class$, can be
decomposed into two phases:

\begin{itemize} \itemsep1pt \parskip0pt \parsep0pt
	\item a preprocessing phase that is performed in time $f(\e,|q|)\cdot
          \size{\A}^{1+\epsilon}$ for some function $f$, and
	\item an enumeration phase that outputs $q(\A)$ with no repetition and
          a delay depending only on $q$, $\e$, and \Class between any two consecutive
          outputs.
          The enumeration phase has full access to the output of the preprocessing
          phase and can use extra memory whose size depends only on $q$, $\e$ and \Class.
\end{itemize}

\noindent
Notice that if we can enumerate $q$ with constant delay after a pseudo-linear
preprocessing, then all answers can be output in time
$g(|q|,\epsilon)\cdot(\size{\A}^{1+\epsilon}+|q(\A)|)$, for some function $g$,
and the first solution is computed in pseudo-linear time. In the particular case
of boolean queries, the associated model checking problem must be solvable in
pseudo-linear time.

\begin{exa}\label{example-def}
  To illustrate these notions, consider the binary query $q(x,y)$ over colored graphs
  computing the pairs of nodes $(x,y)$ such that $x$ is \emph{blue}, $y$ is
  \emph{red}, and there is no edge from $x$ to $y$. It can be expressed in
  \FO by
  \begin{equation*}
  B(x) \land R(y) \land \lnot E(x,y).
  \end{equation*}
  A naive algorithm for evaluating $q$ would iterate through all blue nodes,
  then iterate through all red nodes, check if they are linked by an edge and,
  if not, output the resulting pair, otherwise try the next pair.

  With our RAM model, after a linear preprocessing, we can easily iterate
  through all blue nodes with a constant delay
  between any two of them and similarly for red nodes. By
  Corollary~\ref{cor-test}, we can test in constant
  time whether there is an edge between any two nodes. The problem with this
  algorithm is that many pairs of appropriate color may be false
  hits. Hence the delay between two consecutive outputs may be arbitrarily large.

  If the degree is assumed to be bounded by a fixed constant, then the above
  algorithm enumerates
  all answers with constant delay, since the number of false hits for each blue
  node is bounded by the degree. We will see that for structures of low degree
  we can modify the algorithm in order to achieve the same result.
\end{exa}

\subsection{Classes of structures of low degree}\label{subsection:LowDegree}

Intuitively a class \Class of structures has \emph{low
degree} if for all $\delta>0$, all but finitely many structures \A of \Class have degree
at most $|\A|^\delta$ (see~\cite{Grohe-STACS01}).
More formally, $\Class$  has low degree if for
  every $\delta \in \QQpos$ there is an $n_\delta\in\NNpos$ such that all structures
  $\A\in\Class$ of cardinality $|\A|\geq n_\delta$ have $\degree(\A)\leq |\A|^\delta$.
If there is a computable function associating $n_\delta$ from $\delta$ then we
furthermore say that the class is effective.

For example, for every fixed number $c>0$, the class of all
structures of degree at most $(\log n)^c$ is of low degree and effective.
Clearly, an arbitrary class $\Class$ of structures can be
transformed into a class $\Class'$ of low degree by padding each
$\A\in \Class$ with a suitable number of isolated elements (i.e.,
elements of degree 0). Therefore classes of low degree
are usually \emph{not} closed under taking substructures. In particular if we
apply the padding trick to the class of cliques, we obtain a class of low
degree that is not in any of the classes with known low evaluation complexity
such as the ``nowhere dense'' case mentioned in the introduction.

Notice that $\degree(\A)\leq |\A|^\delta$ implies that $\size{\A}\leq
c{\cdot}|\A|^{1+\delta\cdot r}$, where $r$
is the maximal arity of the signature and $c$ is a number only
depending on $\sigma$.

It is known that on classes of graphs of low degree, model
checking of first-order sentences can be done in pseudo-linear time. We will
actually need the following stronger result:

\begin{thm}[Grohe~\cite{Grohe-STACS01}]\label{thm:grohe-low-degree}
  There is a computable function $h$ such that on input of a structure $\A\in\Class$ of
  degree $d$ and a sentence
  $q\in\FO$, one can test in time   $h(|q|){\cdot}
  |\A| {\cdot} d^{h(|q|)}$ whether $\A\models q$.

In particular, if $\Class$ is a class of structures of low degree,
then there is a function $g$ such that, given a structure
$\A\in\Class$, a sentence $q\in\FO$, and $\epsilon>0$, one can check
if $\A\models q$ in time
$g(|q|,\epsilon){\cdot}|\A|^{1+\epsilon}$. If $\Class$ is effective
then $g$ is computable.
\end{thm}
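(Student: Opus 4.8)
The first statement is Grohe's theorem from~\cite{Grohe-STACS01}, which I take as given; the real task is to derive the ``in particular'' consequence for classes of low degree from it. The plan is essentially a single calculation that balances the degree-dependent factor $d^{h(|q|)}$ against the low degree guarantee $\degree(\A)\leq|\A|^\delta$, by choosing $\delta$ small enough as a function of $\epsilon$ and $|q|$.

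First I would fix a sentence $q\in\FO$ and $\epsilon>0$, and set $\delta\deff\epsilon/h(|q|)$, which lies in $\QQpos$ since $h(|q|)\geq 1$. By the definition of low degree there is a threshold $n_\delta\in\NNpos$ such that every $\A\in\Class$ with $|\A|\geq n_\delta$ satisfies $\degree(\A)\leq|\A|^\delta$. For such structures, writing $d=\degree(\A)$, the bound from the first statement yields running time at most
\[
h(|q|)\cdot|\A|\cdot d^{h(|q|)}\;\leq\; h(|q|)\cdot|\A|\cdot|\A|^{\delta\cdot h(|q|)}\;=\;h(|q|)\cdot|\A|^{1+\epsilon},
\]
where the last equality uses $\delta\cdot h(|q|)=\epsilon$. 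This already gives the desired pseudo-linear bound for all sufficiently large structures of \Class.

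It then remains to handle the finitely many structures $\A\in\Class$ with $|\A|<n_\delta$. Since $n_\delta$ depends only on $\delta$, and hence only on $\epsilon$ and $|q|$, each such exceptional structure has cardinality bounded by a value depending only on $\epsilon$ and $|q|$; model checking $q$ on any of them therefore takes time bounded by some $c(|q|,\epsilon)$. Defining $g(|q|,\epsilon)$ to be the maximum of $h(|q|)$ and $c(|q|,\epsilon)$ gives the claimed uniform bound $g(|q|,\epsilon)\cdot|\A|^{1+\epsilon}$ across all of \Class.

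The only genuinely delicate point is effectiveness. When \Class is effective, the threshold $n_\delta$ is computable from $\delta$; since $\delta=\epsilon/h(|q|)$ is computable from $\epsilon$ and $q$ (as $h$ is computable), so is $n_\delta$, and one can then bound $c(|q|,\epsilon)$ effectively, for instance by brute-force model checking on each structure of cardinality below $n_\delta$. Hence $g$ is computable in this case. I expect the main care to lie not in the central inequality but in this bookkeeping of the finitely many small structures and in keeping the dependence of $\delta$, $n_\delta$, and $g$ on $\epsilon$ and $q$ transparent.
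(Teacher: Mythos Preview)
Your proposal is correct and matches the paper's approach. The paper states Theorem~\ref{thm:grohe-low-degree} without proof (it is cited from~\cite{Grohe-STACS01}), but the derivation of the ``in particular'' part from the first statement is precisely the argument you give, and the paper uses this very computation---choosing $\delta=\epsilon/h(|q|)$, bounding the large structures by $h(|q|)\cdot|\A|^{1+\epsilon}$, and absorbing the finitely many small structures into the constant $g(|q|,\epsilon)$---verbatim in the proof of Proposition~\ref{prop:connected-cqs_low-degree} and again after Propositions~\ref{prop:counting},~\ref{prop:testing}, and~\ref{prop:enumeration}.
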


\subsection{Our results}

We are now ready to state our main results, which essentially lift
Theorem~\ref{thm:grohe-low-degree} to non-boolean queries and to counting,
testing, and enumerating their answers.

Our first result is that we can count the number of answers to a query in pseudo-linear time.

\begin{thm}\label{thm:counting}
  Let \Class be a class of structures of low degree.  There is a function $g$
  such that, given a structure $\A\in\Class$, a query $q\in\FO$, and $\e>0$, one
  can compute $|q(\A)|$ in
  time $g(|q|,\epsilon){\cdot} |\A|^{1+\epsilon}$. If \Class is effective then g is computable.
\end{thm}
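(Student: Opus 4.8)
The plan is to reduce the counting problem to the boolean model-checking result of Theorem~\ref{thm:grohe-low-degree} by expressing $|q(\A)|$ in terms of the truth values of a family of sentences, each of which can be checked in pseudo-linear time. The central difficulty is that $|q(\A)|$ may be exponential in the arity $k$ of $q$, so we cannot afford to enumerate the answers one by one. The key structural observation, anticipated in the introduction, is that under the low-degree assumption the set of answers splits into two kinds: those tuples $\bar a=(a_1,\dots,a_k)$ whose components are all within bounded distance of each other in the Gaifman graph, and those where at least one component is ``far'' from the others. The first kind is small in number and can be counted directly, while the second kind must be counted by a more clever decomposition.

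First I would invoke Gaifman locality: any $q\in\FO$ is equivalent to a boolean combination of local sentences and basic local sentences asserting the existence of scattered tuples satisfying a local condition. I would fix a Gaifman radius $\rho$ depending only on $|q|$ and partition the candidate answer tuples according to the \emph{connection pattern} on $\{1,\dots,k\}$, i.e.\ which pairs of components lie within distance $2\rho$ of each other. For each fixed pattern $P$, the components of a tuple group into clusters, where distinct clusters are mutually $>2\rho$-far, so the $\rho$-neighborhoods of the clusters are pairwise disjoint and the local type of $\bar a$ decomposes as the product of the local types of its clusters. Within a single cluster all components are within bounded distance of a representative node, and by the low-degree bound there are at most $|\A|\cdot d^{O(\rho)}=|\A|^{1+o(1)}$ such clusters of each fixed local type, which is pseudo-linear.

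The counting then proceeds cluster-type by cluster-type. For a fixed pattern $P$ with clusters $C_1,\dots,C_m$ and fixed isomorphism types $\tau_1,\dots,\tau_m$ for their neighborhoods, the number of answers matching this profile factors as a product $\prod_{j} N_j$ where $N_j$ is the number of realizations of cluster $C_j$ with type $\tau_j$; each $N_j$ is computed by iterating over the pseudo-linearly many realizable cluster centers (accessible via the data structure of Corollary~\ref{cor-test} and Theorem~\ref{thm-storing-complete}) and testing the local condition, which depends only on a bounded neighborhood and hence takes time $f(|q|)\cdot d^{O(\rho)}$ per center. The delicate point is the \emph{scattering / inclusion–exclusion} step: requiring that clusters be pairwise far apart and that no two components coincide introduces overcounting, which I would correct by summing over coarser-versus-finer patterns with Möbius signs, or equivalently by Grohe's technique of reducing scattered existential statements to counting realizations of a threshold number of disjoint local types. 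Summing the corrected counts over the finitely many patterns $P$ and type assignments $\bar\tau$ (finitely many, bounded by a function of $|q|$) yields $|q(\A)|$.

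The main obstacle I expect is precisely this scattering step: one must ensure that far-apart local realizations can be combined freely and counted by a product formula, which requires that choosing a realization for one cluster forbids only boundedly many realizations of another (those falling within distance $2\rho$), so the ``collision'' correction stays pseudo-linear rather than quadratic. This is exactly where the low-degree hypothesis is used twice over—once to bound the number of cluster realizations by $|\A|^{1+\epsilon}$, and once to bound, for each realization, the number of nearby realizations it excludes by $d^{O(\rho)}=|\A|^{o(1)}$. Choosing $\rho$ and the error budget as a function of the target $\epsilon$ and $|q|$, and collecting the finitely many pattern/type contributions, gives the claimed time bound $g(|q|,\epsilon)\cdot|\A|^{1+\epsilon}$, with $g$ computable whenever \Class is effective since then the threshold $n_\delta$ needed to guarantee $d\le|\A|^\delta$ is computable from $\delta$.
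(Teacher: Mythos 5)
Your overall strategy is the one the paper follows: Gaifman localization, grouping the components of a candidate answer into clusters according to a distance pattern, reducing the local condition to conditions on the individual clusters, counting the pseudo-linearly many realizations of each connected cluster, and correcting for the scatteredness requirement by inclusion--exclusion. The paper merely packages the first steps as the quantifier elimination of Proposition~\ref{prop-quantifier-elim-detailed}, which turns the instance into a quantifier-free query on a colored graph whose vertices are cluster realizations, and performs the inclusion--exclusion as the recursive subtraction $|\gamma_1\und\nicht E(x,y)|=|\gamma_1|-|\gamma_1\und E(x,y)|$ of Lemma~\ref{lemma:counting-graphs}; Example~\ref{example-counting} is exactly your correction step in the binary case.

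The one step that does not work as written is the sum over ``isomorphism types $\tau_1,\ldots,\tau_m$ for their neighborhoods.'' The number of isomorphism types of $\rho$-neighborhoods is not bounded by a function of $|q|$ alone: it grows with the degree $d$, which in a low-degree class is unbounded (e.g.\ $d=(\log n)^c$). So your final sum over type assignments $\bar\tau$ does not range over ``finitely many, bounded by a function of $|q|$'' objects, and iterating instead over all combinations of types actually realized in $\A$ would cost up to $n^{m}$ rather than pseudo-linear time. What makes the product decomposition work with boundedly many cases is that satisfaction of the $\rho$-local formula by a scattered tuple depends only on the bounded-quantifier-rank first-order types of the clusters, of which there are boundedly many in terms of $|q|$; equivalently, one applies the Feferman--Vaught theorem to rewrite the local formula as a disjunction, of size bounded in $|q|$, of conjunctions of formulas each local around a single cluster (this is Step~2 of the paper's proof of Proposition~\ref{prop-quantifier-elim-detailed}). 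With that substitution --- and after making the resulting disjunction mutually exclusive so that the counts add rather than overcount --- your argument goes through and essentially coincides with the paper's.
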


Our second result is that we can test whether a given tuple is part of the
answers in constant time after a pseudo-linear time preprocessing.

\begin{thm}\label{thm:testing}
  Let \Class be a class of structures of low degree.  There is a function $g$
  such that, given a structure $\A\in\Class$, a query $q\in\FO$, and $\e>0$, one
  can compute in time $g(|q|,\epsilon){\cdot} |\A|^{1+\epsilon}$ a data structure such that, on input of any $\bar a$, one can then
  test in time $g(|q|,\epsilon)$ whether $\bar a\in q(\A)$. If \Class is effective then g is computable.
\end{thm}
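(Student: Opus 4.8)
The plan is to do all the hard work once, in the preprocessing phase, by reducing the first-order query $q$ over $\A$ to a \emph{quantifier-free} query over a colored graph, and then to notice that membership testing for a quantifier-free query is immediate in constant time via Corollary~\ref{cor-test}. Concretely, I would invoke the quantifier-elimination result announced in Section~\ref{subsection:qeli} (the same tool underlying Theorem~\ref{thm:counting}): in time $f(|q|,\e){\cdot}|\A|^{1+\e}$ it produces a colored graph \B over the domain $\dom(\A)$, whose degree is bounded by a function of $\degree(\A)$ and $|q|$ (hence \B is again of low degree), together with a quantifier-free query $q'(\bar x)$ of size bounded by a function of $|q|$ and $\e$, such that $q'(\B)=q(\A)$. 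The entire difficulty of the theorem is concentrated in this reduction; everything after it is routine.

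Granting the reduction, the preprocessing phase consists of computing \B and $q'$ as above and then building, on top of \B, the data structure of Corollary~\ref{cor-test}. Since \B has low degree, this last step costs only $O(d'^{r'}{\cdot}|\A|^{1+\e})$, where $d'$ is the degree of \B and $r'$ its maximal arity, which is again pseudo-linear; so the total preprocessing time stays within $g(|q|,\e){\cdot}|\A|^{1+\e}$ for a suitable $g$, and $g$ is computable whenever \Class is effective, exactly as in Theorem~\ref{thm:grohe-low-degree}.

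For the test itself, given a tuple $\bar a$ I would first check in time $O(|\bar a|)$ that each component of $\bar a$ is a genuine element of $\dom(\A)$ (rejecting otherwise), and then evaluate the quantifier-free formula $q'(\bar a)$. As $q'$ is a Boolean combination of at most $|q'|$ atoms---each being an equality $a_i=a_j$, a color atom $C(a_i)$, or a relational atom $R(\dots)$---every atom can be decided in constant time: equalities directly, and color or relational facts through the Corollary~\ref{cor-test} data structure on \B. Combining these at most $|q'|$ truth values according to the Boolean structure of $q'$ takes time depending only on $|q'|$, hence only on $|q|$ and $\e$. This yields the decision $\bar a\in q(\A)$ within the required bound $g(|q|,\e)$.

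The main obstacle, as noted, is the quantifier-elimination step rather than the test. In the low-degree regime one cannot simply precompute a bounded palette of neighborhood types as one does for bounded degree, since an $r$-neighborhood may contain up to $\degree(\A)^r\approx|\A|^{\delta r}$ elements; the reduction must therefore rest on Gaifman locality combined with Grohe's pseudo-linear machinery (Theorem~\ref{thm:grohe-low-degree}) to decide, and cache via the Storing Theorem, the finitely many local properties to which the atoms of $q'$ refer. Once that reduction is in hand, the testing theorem is an immediate consequence of the constant-time fact test of Corollary~\ref{cor-test}.
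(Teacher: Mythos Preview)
Your approach is essentially the paper's: invoke the quantifier-elimination of Proposition~\ref{prop-quantifier-elim-detailed}, then use Corollary~\ref{cor-test} on the resulting colored graph so that a quantifier-free formula can be evaluated atom by atom in constant time. The one inaccuracy is your description of the quantifier-elimination output. The colored graph $\G$ produced by Proposition~\ref{prop-quantifier-elim-detailed} does \emph{not} have domain $\dom(\A)$, and one does \emph{not} have $q'(\G)=q(\A)$ on the nose; rather, $\dom(\G)$ is a new set (encoding connected pieces of tuples of $\A$), and the proposition supplies an injective map $f:\dom(\A)^k\to\dom(\G)^k$ that restricts to a bijection between $q(\A)$ and $\psi(\G)$. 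Consequently the test is: given $\ov{a}$, compute $\ov{v}\deff f(\ov{a})$ in time $O(k^2)$ (this uses the data structure for $f$ built in the preprocessing), and then check whether $\ov{v}\in\psi(\G)$ via Corollary~\ref{cor-test}. Apart from this extra translation step---and the fact that the preprocessing for $f$ is where the $n^{1+\e}$ factor from the Storing Theorem genuinely enters---your argument coincides with the paper's proof of Proposition~\ref{prop:testing} and the subsequent derivation of Theorem~\ref{thm:testing}.
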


Finally, we show that we can enumerate the answers to a query with constant
delay after a pseudo-linear time preprocessing.

\begin{thm}\label{thm:enum}
  Let \Class be a class of structures of low degree. There is a function $g$
  such that, given a structure $\A\in \Class$, a query $q\in \FO$ and
  $\e>0$, the enumeration problem of $q$ over \A can be solved with delay $g(|q|,\epsilon)$
  after a preprocessing running in time $g(|q|,\epsilon){\cdot}
  |\A|^{1+\epsilon}$.
 If \Class is effective then g is computable.
\end{thm}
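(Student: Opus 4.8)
The plan is to reduce the enumeration of an arbitrary \FO query $q$ to the enumeration of a quantifier-free query over a colored graph, and then to give a constant-delay enumeration procedure for this simpler case. The reduction is exactly the quantifier-elimination method announced in the introduction (Section~\ref{subsection:qeli}): using Gaifman locality and the known model-checking machinery of Theorem~\ref{thm:grohe-low-degree}, one precomputes, in pseudo-linear time, new unary and binary relations (colors and edges) on the domain so that $q$ becomes equivalent to a quantifier-free formula $q'$ over the enriched structure. Since the enrichment adds only relations of arity at most~$2$ and keeps the degree low (up to the usual $n^{\e}$ blow-up), it suffices to solve the enumeration problem for quantifier-free queries over colored graphs of low degree, and to compose the two phases: the preprocessing of the reduction runs in time $g(|q|,\e)\cdot|\A|^{1+\e}$, after which we run the preprocessing for the simpler enumeration.

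For a quantifier-free $q'(\bar x)$ I would first put it in disjunctive normal form and, by a standard inclusion–exclusion / disjoint-union argument over the finitely many disjuncts (each depending only on $|q|$), reduce to enumerating a single conjunction of literals with constant delay and no repetition; merging several constant-delay enumerators while suppressing duplicates is routine and costs only a factor depending on $|q|$ and $\e$. A single conjunction fixes, for each pair of free variables, whether they must be equal, adjacent, or non-adjacent, and fixes the color of each variable. The crucial difficulty is precisely the one flagged in the introduction: a conjunction may require certain components of an answer to be \emph{far} from one another (e.g.\ the literal $\lnot E(x,y)$ in Example~\ref{example-def}), and the naive nested loop can produce arbitrarily long runs of false hits among pairs that happen to be close. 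The low-degree assumption is what saves us: for a fixed node, only $O(d^{c})=O(n^{c\cdot\delta})$ other nodes lie within a bounded radius, so the ``close'' pairs are few.

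The key technical idea, as announced, is to precompute \emph{jump functions} that let us skip in constant time from a candidate tuple whose later components are too close to the earlier ones, directly to the next admissible answer in lexicographic order. Concretely, for each relevant pair of coordinates and each required color, I would build a function that, given the already-fixed prefix of an answer and a current value for the next variable, returns the lexicographically least legal continuation (respecting the adjacency/non-adjacency and color constraints). Because each such function has pseudo-linear domain and pseudo-linear-size description, it can be stored and queried in constant time using the Storing Theorem (Theorem~\ref{thm-storing-complete}); membership in the enriched relations is tested in constant time via Corollary~\ref{cor-test}. The enumeration then proceeds coordinate by coordinate, at each step using a jump function to advance past the at most $O(d^{c})$ forbidden neighbors, so that between any two outputs only a constant (in $|q|$, $\e$, \Class) number of RAM steps elapse; this yields the constant delay. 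Composing with the quantifier-elimination preprocessing gives the claimed pseudo-linear preprocessing and constant delay, and when \Class is effective every constant $n_\delta$ is computable, so $g$ is computable.

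The main obstacle I expect is designing the jump functions so that they genuinely deliver constant delay for a \emph{conjunction} of several far-apart constraints simultaneously, rather than for a single $\lnot E$ literal: one must ensure that advancing the last coordinate never forces an unbounded amount of backtracking over earlier coordinates, and that the no-repetition requirement across the whole DNF is maintained. Handling this coherently — together with arguing that all the auxiliary functions remain of pseudo-linear size under the low-degree bound $\degree(\A)\leq|\A|^{\delta}$ — is the technical heart of the argument.
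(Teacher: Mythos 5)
Your overall architecture matches the paper's: quantifier elimination via Gaifman and Feferman--Vaught down to a quantifier-free query over a colored graph (Proposition~\ref{prop-quantifier-elim-detailed}), a mutually exclusive DNF, induction on the arity, and precomputed ``skip'' functions stored with the Storing Theorem to jump over false hits. However, there is a genuine gap at exactly the point you yourself flag as ``the technical heart'': the jump function as you describe it --- ``given the already-fixed prefix of an answer and a current value for the next variable, returns the lexicographically least legal continuation'' --- has a domain indexed by prefixes, i.e.\ of size up to $n^{k-1}$, so it can be neither precomputed nor stored in pseudo-linear time. If instead you intend one jump function per pair of coordinates, then after jumping past a neighbor of one prefix component you may land on a neighbor of another, and iterating such pairwise jumps can take $\Omega(k\cdot d)$ steps between outputs, which is not constant.

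The paper's resolution is to key the function $\fun{skip}(y,V)$ on the current candidate $y$ for the last variable together with only the \emph{subset} $V$ of prefix components that lie close to $y$ (formalized via auxiliary formulas $E_1,\ldots,E_k$ that bound the relevant distance by $3k$); since $|V|<k$ and every element of $V$ lies within a bounded radius of $y$, the domain of $\fun{skip}$ has size $\tilde{n}\cdot\hat{d}^{O(k^2)}$, which is pseudo-linear and hence amenable to Theorem~\ref{thm-storing-complete}. This creates a further proof obligation that your plan does not address: one must show that a value $z=\fun{skip}(y,V)$, chosen to avoid adjacency only with the \emph{near} components $V$, is in fact also non-adjacent to the far components of the prefix. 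The paper establishes this with a separate, somewhat delicate argument exploiting the inductive structure of the $E_i$'s. Without this construction and its soundness argument the constant-delay claim is not established; the remainder of your plan (the reduction, the exclusive DNF, the induction on arity, and the use of Theorem~\ref{thm-storing-complete} and Corollary~\ref{cor-test}) is sound and agrees with the paper.
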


\subsection{Further notation}

We close this section by fixing technical notations that will be used
throughout this paper.

For a structure $\A$ we write $\dist^\A(a,b)$ for the distance between
two nodes $a$ and $b$ of the Gaifman graph of $\A$.
For an element $a\in\dom(\A)$ and a number $r\in\NN$, the
\emph{$r$-ball} around $a$ is the set $N_r^\A(a)$ of all nodes
$b\in\dom(\A)$ with $\dist^\A(a,b)\leq r$.
The \emph{$r$-neighborhood} around $a$ is the induced substructure
$\N_r^\A(a)$ of $\A$ on $N_r^\A(a)$.
Note that if $\A$ is of degree $\leq d$ for $d\geq 2$, then
$|N_r^\A(a)|\leq \sum_{i=0}^r d^i < d^{r+1}$.

\section{Evaluation algorithms}\label{section-main}

In this section, we present our algorithms for counting, testing, and enumerating
the solutions to a query (see Sections~\ref{subsection:Counting},~\ref{subsection:Testing}, and~\ref{subsection:Enumeration}). They all build on the
same preprocessing algorithm which
runs in pseudo-linear time and which essentially reduces the input to a
quantifier-free query over a  suitable signature (see Section~\ref{subsection:qeli}).
However, before presenting these algorithms, we start with very simple cases.

  \subsection{Computing the neighborhoods}

Unsurprisingly, all our algorithms will start by computing the neighborhood
$\N_r^\A(a)$ for all the elements $a$ of the input structure \A for a suitable
constant $r$ depending only on $q$. We actually do not need to compute
$\N^\A_r(a)$ but only
$\N^{\A\downarrow q}_r(a)$ where $\Aq$ is the restriction
of $\A$ to the relational symbols occurring in $q$.

The next lemma states that all these
neighborhoods can be computed in reasonable time.

\begin{lem}\label{lemma-compute-neigh}
  There is an algorithm which, at input of a structure \A, a query $q$ and a
  number $r$ computes $\N^{\A\downarrow q}_r(a)$ for all elements $a$ of the
  domain of \A in time $O(|q| \cdot n \cdot d^{h(r,|q|)})$, where
  $n=|\dom(\A)|$, $d=\degree(\A)$ and $h$ is a computable function.
\end{lem}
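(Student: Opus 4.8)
The plan is to proceed in three phases, exploiting the bound $|N^{\A}_r(a)|<d^{r+1}$ recalled at the end of Section~\ref{section-prelim} and the fact that an $\rho$-ary relation of a degree-$d$ structure has at most $n\cdot(d{+}1)^{\rho-1}$ tuples. Throughout I work with $\Aq$, whose Gaifman graph is a subgraph of that of $\A$ and hence also has degree at most $d$.

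First I would build, in a single pass over $\Aq$, the data structures needed for local navigation. For the Gaifman graph I construct adjacency lists stored in an array indexed by the domain $[n]$: I iterate over every tuple of every relation $R$ of $q$ and, for each pair of positions, append the corresponding edge. Since the number of relations and their arities $\rho$ are bounded by $|q|$, and each tuple yields at most $\binom{\rho}{2}$ edges, this pass costs $O(|q|^2\cdot n\cdot d^{|q|})$. In the same pass I index, for each relation symbol $R$ of $q$, all of its tuples by each of their components, so that upon input of a node $b$ I can enumerate all tuples of $\Aq$ containing $b$ in time $O(|q|^2\cdot d^{|q|})$; this is sound because, for a fixed position, the remaining $\rho-1$ components of such a tuple lie within Gaifman-distance $1$ of $b$, so there are at most $d^{\rho-1}$ of them (this is the natural generalization of the linear-preprocessing index discussed after Corollary~\ref{cor-test}). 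All array allocations and initializations of size $n$ are performed once here, in time $O(n)$, and never inside the per-source loop.

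Second, for each element $a$ of the domain I would compute the ball $N^{\Aq}_r(a)$ by a breadth-first search of depth $r$ using the adjacency lists. Such a search touches fewer than $d^{r+1}$ nodes, each of degree at most $d$, so it costs $O(d^{r+2})$. Third, from the ball $B=N^{\Aq}_r(a)$ I read off the induced substructure $\N^{\Aq}_r(a)$: using the tuple index, for each $b\in B$ I enumerate the tuples of $\Aq$ containing $b$ and keep those whose components all lie in $B$ (membership in $B$ being testable in constant time via a marking array over $[n]$); to add each such tuple exactly once I retain it only when $b$ is its first component. This costs $O(|B|\cdot|q|^2\cdot d^{|q|})=O(|q|^2\cdot d^{r+1+|q|})$ per source. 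Summing the three phases over the $n$ sources gives total time $O(|q|^2\cdot n\cdot d^{r+1+|q|})$, which, since $d\geq 2$ lets us absorb polynomial factors in $|q|$ into the exponent (e.g.\ $|q|^2\leq d^{2|q|}$), is of the claimed form $O(|q|\cdot n\cdot d^{h(r,|q|)})$ for a computable $h$.

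I expect the only real obstacle to be bookkeeping: ensuring that the factor in front of $n$ depends on $d$, $r$, $|q|$ but never on $n$. This forces two conventions. First, every $O(n)$-time array allocation (adjacency lists, tuple index, BFS markers, ball-membership markers) must happen once in the setup, outside the loop over sources. Second, after each breadth-first search I must reset only the markers of the nodes actually visited, rather than reinitializing an array of length $n$, which keeps the per-source clean-up at $O(d^{r+1})$. Once these are in place, every remaining bound follows directly from the $d^{r+1}$ bound on ball sizes and the $(d{+}1)^{\rho-1}n$ bound on the number of tuples per relation.
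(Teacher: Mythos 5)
Your proof is correct and follows essentially the same route as the paper's: build the Gaifman graph of $\Aq$ in one pass over the relations, compute the $r$-balls (your per-node BFS is interchangeable with the paper's $r$ rounds of neighborhood expansion), and then recover the induced substructures from the tuples, with the same $d^{r+1}$ ball-size and $(d{+}1)^{\rho-1}n$ tuple-count bounds doing the work. The extra care you take with one-time array allocation and resetting only visited markers is a sound (and welcome) elaboration of bookkeeping the paper leaves implicit.
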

\begin{proof}
  We first compute the Gaifman graph associated to $\Aq$. We consider
  each relation symbol $R$ occurring in $q$. We scan $R^\A$ and for each of its tuples
  we add the corresponding clique to the Gaifman graph. As each element $a$ of
  $\dom(\A)$ can appear at the first component of at most $(d+1)^{\ar(R)-1}$ tuples of
  $R^\A$, the total time
  is $|q| \cdot n \cdot (d+1)^{k-1}$ where
  $k$ is
  the maximal arity of the relation symbols occurring in $q$.
    This yields a graph with $n$ nodes and
  degree $d$ corresponding to the Gaifman graph of $\Aq$. From this
  graph we easily derive, in time $O(n \cdot d)$,  a structure associating to each
  node $a \in \dom(\A)$ the set of its immediate neighbors, i.e $N_1^{\A\downarrow q}(a)$.

  With $r$ steps of transitive closure computation we get a structure
  associating to each node $a\in\dom(\A)$ the set $N^{\A\downarrow q}_r(a)$ of
  nodes at distance at most $r$ from $a$. This can be done
  in time $O(n\cdot d^r)$.

  With an extra scan over the database we can derive $\N_r^{\A\downarrow q}(a)$
  from $N^{\A\downarrow q}_r(a)$. Altogether we get the desired time bounds.
\end{proof}

  \subsection{Connected conjunctive queries}\label{subsection:warm-up}

  As a warm-up for working with classes of structures of low degree, we first
  consider the simple case of queries which we call
  \emph{connected conjunctive queries}, and which are defined as follows.

  A \emph{conjunction} is a query $\gamma$ which is a conjunction of relational
  atoms and potentially negated \emph{unary} atoms.  Note that the query of
  Example~\ref{example-def} is not a conjunction as it has a binary negated
  atom.  With each conjunction $\gamma$ we associate a \emph{query graph}
  $H_\gamma$. This is the undirected graph whose vertices are the variables
  $x_1,\ldots,x_k$ of $\gamma$, and where there is an edge between two vertices
  $x_i$ and $x_j$ iff $\gamma$ contains a relational atom in which both $x_i$
  and $x_j$ occur.
  We call the conjunction $\gamma$ \emph{connected} if its query graph $H_\gamma$ is
  connected.

  A \emph{connected conjunctive query} is a query
  $q(\ov{x})$ of the form $\exists \ov{y}\, \gamma(\ov{x},\ov{y})$, where
  $\gamma$ is a \emph{connected conjunction} in which all variables of
  $\ov{x},\ov{y}$ occur (here, $|\ov{y}|=0$ is allowed).

  The next simple lemma implies that over a class of structures of low degree,
  connected conjunctive queries can be evaluated in pseudo-linear time. It will
  be used in several places throughout this paper: in the proof of
  Proposition~\ref{prop:connected-cqs_low-degree}, and in the proofs for our
  counting and enumeration results in
  Sections~\ref{subsection:Counting} and~\ref{subsection:Enumeration}.

  \begin{lem}\label{lemma:connected-cqs}
   There is an algorithm which, at input of a structure $\A$ and a
   connected conjunctive query $q(\ov{x})$ computes $q(\A)$ in time
   $O(|q|{\cdot}n{\cdot}d^{h(|q|)})$, where
   $n=|\dom(\A)|$,
   $d=\degree(\A)$, and
   $h$ is a computable function.
  \end{lem}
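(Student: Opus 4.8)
The plan is to exploit the connectivity of $\gamma$ to confine, for each candidate image of a distinguished variable, the entire search for a satisfying assignment to a neighborhood of radius bounded by $|q|$. Let $z_1,\dots,z_m$ be the variables occurring in $\gamma$, ordered so that $z_1$ is the first free variable $x_1$ (if $\ov x$ is empty the query is Boolean and $z_1$ may be taken to be any variable). The crucial observation is a distance bound: whenever two variables $z_i,z_j$ occur together in a relational atom of $\gamma$, any satisfying assignment places their images in a common tuple of $\Aq$, hence at Gaifman-distance $\leq 1$. So an edge of the query graph $H_\gamma$ forces distance $\leq 1$ in $\Aq$, and a path of length $\ell$ in $H_\gamma$ forces distance $\leq\ell$. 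As $H_\gamma$ is connected on $m\leq|q|$ vertices its diameter is at most $m-1$, so in every satisfying assignment all variable images lie within distance $r:=|q|$ of the image $a$ of $z_1$; the whole assignment therefore lives inside the ball $N^{\A\downarrow q}_r(a)$.

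This dictates the algorithm. First, invoking Lemma~\ref{lemma-compute-neigh} with this $r$, compute the neighborhoods $\N^{\A\downarrow q}_r(a)$ for all $a\in\dom(\A)$; since $r$ depends only on $|q|$, this costs $O(|q|\cdot n\cdot d^{h_0(|q|)})$ for a computable $h_0$. Then loop over all $a\in\dom(\A)$, treating $a$ as the image of $z_1$: by the distance bound it suffices to enumerate the assignments of $z_2,\dots,z_m$ to elements of the ball $N^{\A\downarrow q}_r(a)$, of which there are at most $(d^{r+1})^{m-1}$, i.e.\ $d$ raised to a power bounded by a function of $|q|$. For each such assignment, test whether $\gamma$ holds. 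Every atom test, relational or (negated) unary, is carried out locally inside the small structure $\N^{\A\downarrow q}_r(a)$, whose relations contain only $d^{O(|q|)}$ tuples; thus no global lookup structure is needed and each test costs a function of $|q|$ and $d$ only. Every assignment satisfying $\gamma$ yields its $\ov x$-projection, an answer tuple whose first component equals $a$.

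Correctness and the absence of repetitions follow from two observations. Completeness: for any answer $\ov a$ together with a witness for $\ov y$, the resulting full assignment has all images within distance $r$ of the first component $a_1$, so it is inspected when the loop reaches $a=a_1$. No repetition: each answer tuple has first component equal to the image of $z_1$, hence distinct outer iterations never emit the same tuple, while within one iteration we collect the (at most $d^{O(|q|)}$) distinct $\ov x$-projections in a local set before outputting them. Summing over the $n$ choices of $a$ yields the claimed bound $O(|q|\cdot n\cdot d^{h(|q|)})$.

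The main obstacle is the first step, namely converting the syntactic connectivity of $H_\gamma$ into the geometric fact that every satisfying assignment is trapped in a ball of radius $|q|$ around the image of $z_1$; once this localisation is established the remainder is a bounded-size exhaustive search. A secondary point requiring care is keeping the dependence on $n$ exactly linear: since the target is $n$ and not $n^{1+\e}$, we must avoid the pseudo-linear data structure of Corollary~\ref{cor-test} and instead perform all atom tests inside the already-computed local neighborhoods.
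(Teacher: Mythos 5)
Your proposal is correct and follows essentially the same route as the paper: use the connectivity of the query graph to confine every satisfying assignment to a ball of radius bounded by the number of variables around the image of the first free variable, precompute these neighborhoods via Lemma~\ref{lemma-compute-neigh}, and then do a brute-force local search for each $a\in\dom(\A)$, with disjointness of the answer sets guaranteed by the first component. Your explicit path-length/diameter argument for the localisation and your remark about performing atom tests locally (rather than via Corollary~\ref{cor-test}) are just slightly more detailed renderings of what the paper does implicitly.
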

  \begin{proof}
    Let $q(\ov{x})$ be of the form $\exists \ov{y}\,
    \gamma(\ov{x},\ov{y})$, for a connected conjunction $\gamma$.
    Let $k=|\ov{x}|$ be the number of free variables of $q$, let
    $\ell=|\ov{y}|$, and let
    $r=k+\ell$.

    In view of Lemma~\ref{lemma-compute-neigh} we can assume that all the
    neighborhoods $\N^{\A\downarrow q}(a)$ have been computed. In order to
    simplify the notations, in the rest of
    the proof we will write $\N_r^{\A}(a)$ instead of
    $\N_r^{\A\downarrow q}(a)$.

    Note that since $\gamma$ is connected, for every tuple $\ov{c}\in\gamma(\A)$ the
    following is true, where $a$ is the first component of $\ov{c}$.
    All components $c'$ of $\ov{c}$ belong to the $r$-neighborhood
    $\N^\A_r(a)$ of $a$ in $\dom(\A)$.
    Thus, $q(\A)$ is the disjoint union of the sets
    \[
       S_a \ \deff \
       \big\{\, \ov{b} \in q(\N^{\A}_r(a)) \, : \, \text{the first component of
           $\ov{b}$ is $a$}\,\big\},
    \]
    for all $a\in \dom(\A)$.
    For each $a\in \dom(\A)$, the set $S_a$ can be computed as follows:
    \begin{me}
     \item
       Initialize $S_a\deff \emptyset$.

     \item
       Use a brute-force algorithm that enumerates all $k$-tuples
       $\ov{b}$ of elements in $\N^{\A}_r(a)$ whose first component is $a$.

       For each such tuple $\ov{b}$, use a brute-force algorithm
       that checks whether $\N^{\A}_r(a) \models q(\ov{b})$.
       If so, insert $\ov{b}$ into $S_a$

       Note that the number of considered tuples
       $\ov{b}$ is $\leq d^{(r+1)(k-1)}$.
       And checking whether $\N^{\A}_r(a) \models q(\ov{b})$ can be
       done in time $O(|\gamma|{\cdot}d^{(r+1)\ell})$: for this, enumerate all
       $\ell$-tuples $\ov{c}$ of elements in $\N_r^{\A}(a)$ and take
       time $O(|\gamma|)$ to check whether $\gamma(\ov{x},\ov{y})$ is satisfied by the
       tuple $(\ov{b},\ov{c})$.

       Thus, we are done after $O(|\gamma|{\cdot} d^{r^2})$
       steps.
    \end{me}
    In summary, we can compute $q(\A) = \bigcup_{a\in A} S_a$ in time
    $O(n{\cdot}|q|{\cdot} d^{h(|q|)})$, for a computable function~$h$.
  \end{proof}

  As an immediate consequence we can compute in pseudo-linear time the answers
  to a connected conjunctive query over a class of structures of low degree.

  \begin{prop}\label{prop:connected-cqs_low-degree}
   Let $\Class$ be a class of structures of low degree.
   Given a structure $\A\in\Class$, a connected conjunctive query
   $q$, and $\e>0$, one can compute $q(\A)$ in time $g(|q|,\e)\cdot |\A|^{1+\e}$
   for some function $g$ which is computable when $\Class$ is effective.
  \end{prop}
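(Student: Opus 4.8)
The plan is to combine Lemma~\ref{lemma:connected-cqs} with the low-degree hypothesis exactly as Theorem~\ref{thm:grohe-low-degree} combines Grohe's concrete time bound with the low-degree definition. The proposition claims a pseudo-linear bound $g(|q|,\e)\cdot|\A|^{1+\e}$, whereas Lemma~\ref{lemma:connected-cqs} gives the explicit bound $O(|q|\cdot n\cdot d^{h(|q|)})$ in terms of the degree $d=\degree(\A)$. So the entire content of the proof is to show that the degree-dependent factor $d^{h(|q|)}$ can be absorbed into the pseudo-linear budget once we know $\Class$ has low degree.

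First I would fix the input $\A\in\Class$, the connected conjunctive query $q$, and $\e\in\QQpos$, and set $n=|\A|=|\dom(\A)|$ and $d=\degree(\A)$. Running the algorithm of Lemma~\ref{lemma:connected-cqs} computes $q(\A)$ in time $O(|q|\cdot n\cdot d^{h(|q|)})$ for the computable function $h$ from that lemma. The key step is to choose a suitable $\delta\in\QQpos$ depending on $\e$ and $|q|$, and invoke the low-degree definition: for all $\A\in\Class$ with $|\A|\geq n_\delta$ we have $d=\degree(\A)\leq n^{\delta}$, hence $d^{h(|q|)}\leq n^{\delta\cdot h(|q|)}$. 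Choosing $\delta\deff \e/h(|q|)$ (and noting $h(|q|)\geq 1$ without loss of generality) yields $d^{h(|q|)}\leq n^{\e}$, so the total running time is $O(|q|\cdot n^{1+\e})$, which is of the required form $g(|q|,\e)\cdot|\A|^{1+\e}$.

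It remains to deal with the finitely many small structures in $\Class$ with $|\A| < n_\delta$. For these the degree $d$ is bounded by a constant $n_\delta$ depending only on $\e$ and $|q|$, so $d^{h(|q|)}$ is itself bounded by a function of $\e$ and $|q|$, and the running time is again absorbed into $g(|q|,\e)\cdot|\A|^{1+\e}$ (indeed it is bounded by a constant). Thus in both regimes the bound holds after possibly enlarging $g$. Finally, for the effectivity claim, when $\Class$ is effective the threshold $n_\delta$ is computed from $\delta$ by a computable function; since $\delta$ is computed from $\e$ and $|q|$ via the computable $h$, the resulting bound $g$ is computable as well.

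The main obstacle, such as it is, is bookkeeping rather than mathematical depth: one must make sure the choice of $\delta$ feeds back consistently into $n_\delta$ (there is no circularity, since $\delta$ depends only on $\e$ and $|q|$, not on $n$), and one must handle the small-structure regime separately rather than assuming $d\leq n^{\delta}$ unconditionally. Once those two points are handled, the argument is a direct substitution.
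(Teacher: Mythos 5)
Your proposal is correct and matches the paper's own proof essentially verbatim: both run the algorithm of Lemma~\ref{lemma:connected-cqs}, choose $\delta=\e/h(|q|)$ so that $d^{h(|q|)}\leq|\A|^{\e}$ for all sufficiently large $\A\in\Class$, absorb the finitely many small structures into the constant, and derive computability of $g$ from effectivity of $\Class$. No differences worth noting.
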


  \begin{proof}
   We use the algorithm provided in Lemma~\ref{lemma:connected-cqs}.
   To see that the running time is as claimed, we use the assumption
   that $\Class$ is of low degree: for every $\delta>0$ there is
   an $m_\delta\in\NNpos$ such that every structure $\A\in \Class$ of
   cardinality $|\A|\geq m_\delta$ has $\degree(\A)\leq |\A|^\delta$.

   For a given $\epsilon>0$ we let $\delta\deff \frac{\epsilon}{h(|q|)}$
   and define $n_\epsilon\deff m_\delta$.
   Then, every $\A\in\Class$ with $|\A|\geq n_\epsilon$ has
   $\degree(\A)\leq |\A|^{\epsilon/h(|q|)}$.
   Thus, on input of $\A$ and $q$, the algorithm from
   Lemma~\ref{lemma:connected-cqs} has running time $O(|q|{\cdot}
   |\A|^{1+\epsilon})$ if $|\A| \geq n_\epsilon$ and takes time bounded by
   $O(|q|{\cdot} n_\epsilon^{1+h(q)})$ otherwise. This gives the bounds claimed
   by the proposition with a computable function $g$ as soon as we can compute $n_\e$.
  \end{proof}

  The method of the proof of
  Proposition~\ref{prop:connected-cqs_low-degree} above will be used
 several times in the paper.

\subsection{Quantifier elimination and normal form}\label{subsection:qeli}

In this section, we make precise the quantifier elimination approach
that is at the heart of the
preprocessing phase of the query evaluation algorithms of our paper.

A signature is \emph{binary} if all its relation symbols have arity at most 2.
A \emph{colored graph} is a finite relational structure over a binary
signature.

\begin{prop}~\label{prop-quantifier-elim-detailed}
 There is an algorithm which, at input of a structure $\A$, a
 first-order query $\varphi(\ov{x})$, and $\e>0$, produces
 a binary signature $\tau$ (containing, among other symbols, a binary
 relation symbol $E$), a colored graph $\G$ of signature $\tau$,
 an $\FO(\tau)$-formula $\psi(\ov{x})$, a mapping $f$, and a data structure such that
 the following is true for $k=|\ov{x}|$, $n=|\dom(\A)|$, $d=\degree(\A)$ and $h$ some computable function:
 \begin{enumerate}
 \item\label{item:psi-form:prop-quantifier-elim-detailed}
   $\psi$  is quantifier-free.
   Furthermore, $\psi$ is of the form
   \,$(\psi_{1} \und \psi_{2})$, where $\psi_{1}$ states that no
   distinct free variables of $\psi$ are connected by an $E$-edge,
   and $\psi_{2}$ is a positive boolean combination of unary atoms.

 \item\label{item:psi-comput:prop-quantifier-elim-detailed}
   $\tau$ and $\psi$ are computed in time and space
   $h(|\varphi|){\cdot} n{\cdot} d^{h(|\varphi|)}$.\\
   Moreover, $|\tau|\leq h(|\varphi|)$ and
   $|\psi|\leq h(|\varphi|)$.

 \item\label{item:G:prop-quantifier-elim-detailed}
   $\G$ is computed in time and space
   $h(|\varphi|){\cdot} n{\cdot} d^{h(|\varphi|)}$.
   \\
   Moreover, %$|G|\geq n$, and
     $\degree(\G)\leq d^{h(|\varphi|)}$.

 \item\label{item:f:prop-quantifier-elim-detailed}
   $f$ is an injective mapping from $\dom(\A)^k$ to $\dom(\G)^k$
   such that $f$ is a bijection between $\varphi(\A)$ and
   $\psi(\G)$.

   The data structure representing $f$ can be computed in time and
   space $h(|\varphi|){\cdot}
   n^{1+\e}{\cdot} d^{h(|\varphi|)}$ and can then be used as follows: on input of any tuple
   $\bar a \in \dom(\A)^k$, the tuple
   $f(\ov{a})$ can be computed in time $O(k^2)$; and on input of any tuple $\ov{v}\in
   \psi(\G)$, the tuple $f^{-1}(\ov{v})$ can be computed in time $O(k^2)$.

 \end{enumerate}
\end{prop}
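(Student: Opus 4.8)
The plan is to perform quantifier elimination by a locality argument, pushing every quantifier-dependent piece of information into the colors and the edge relation of \G, so that the only things left to check on the colored graph are a scatteredness condition and a purely unary color condition. First I would put $\varphi(\bar x)$ into Gaifman normal form, writing it as a Boolean combination of $r$-local formulas $\chi_1(\bar x),\dots,\chi_m(\bar x)$ around the whole tuple $\bar x$ together with basic local sentences asserting the existence of scattered witnesses of single-variable local conditions; here $r$, $m$, and the sizes of the $\chi_j$ are all bounded by a computable function of $|\varphi|$. The basic local sentences have no free variables, so on the fixed input \A each is simply true or false, and I would decide all of them during preprocessing by invoking Grohe's model-checking algorithm (Theorem~\ref{thm:grohe-low-degree}) and substitute the resulting truth values. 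What is left is a Boolean combination $\beta(\bar x)$ of the $r$-local formulas with $\varphi(\A)=\beta(\A)$, so the task reduces to a Boolean combination of local conditions.

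Next I would build \G. Using Lemma~\ref{lemma-compute-neigh} I compute all neighborhoods of a suitable radius (a bounded multiple of $r$, large enough to accommodate clusters) of the reduct of \A to the symbols of $\varphi$. I partition the components of a tuple into \emph{blocks} by closeness --- two components lie in the same block when they are linked by a chain of components at pairwise distance at most some $\rho$ bounded in $|\varphi|$ --- so that distinct blocks have disjoint, mutually non-adjacent neighborhoods, while a single block is a connected configuration of at most $k$ elements. The domain of \G consists of one vertex per such connected configuration (in particular one vertex per element for singleton blocks); there are at most $n\cdot d^{h(|\varphi|)}$ of them, and a sparse edge relation $E$ recording closeness makes each vertex adjacent to at most $d^{h(|\varphi|)}$ others, giving $\degree(\G)\leq d^{h(|\varphi|)}$. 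Each vertex is colored by the first-order type of quantifier rank at most $\qr(\varphi)$ of its marked neighborhood, together with the set of positions it is responsible for. By a Feferman--Vaught / Ehrenfeucht--Fra\"iss\'e argument the number of such types is bounded by a function of $|\varphi|$ alone (crucially, independent of $d$), so $|\tau|\leq h(|\varphi|)$, and the type of the disjoint union of the block-neighborhoods --- hence the truth of each $\chi_j$ on a scattered tuple --- is determined by the multiset of block-colors.

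With these colors in place I set $\psi:=\psi_1\und\psi_2$ with $\psi_1=\Und_{i\neq j}\nicht E(x_i,x_j)$, and I obtain $\psi_2$ from $\beta$ by replacing each local condition with the corresponding disjunction (over partitions of $[k]$ into blocks and over admissible type assignments) of conjunctions of color atoms. Because a vertex's color records its full local type, every negation can be absorbed into the choice of accepting colors, which is exactly what makes $\psi_2$ a positive Boolean combination of unary atoms. The mapping $f$ sends $\bar a$ to the tuple of block-vertices of its components, sending components of a common block to the same vertex; it is injective since each vertex stores the elements it represents, and $f^{-1}$ simply reads those elements back. Both directions take time $O(k^2)$ once per-element neighborhood and distance tables are available as constant-time lookups, which I obtain by storing them via the Storing Theorem (Theorem~\ref{thm-storing-complete}) in time and space $h(|\varphi|)\cdot n^{1+\epsilon}\cdot d^{h(|\varphi|)}$. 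The time bounds for $\tau$, $\psi$, and \G then follow from Lemma~\ref{lemma-compute-neigh} together with the low-degree bookkeeping already used in Proposition~\ref{prop:connected-cqs_low-degree}.

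The hard part is verifying that $f$ is a genuine \emph{bijection} between $\varphi(\A)$ and $\psi(\G)$, i.e.\ that $\psi$ defines exactly the image under $f$ of the answers, with neither spurious nor missing tuples. The delicate case is clustered components: a block of several free variables collapses to a single vertex, and I must ensure that $\psi_1$ together with the colors (which record the positions a vertex is responsible for) forces the block-partition of $\bar x$ to be recovered unambiguously from any tuple satisfying $\psi$, so that no satisfying tuple can split one block across two distinct, necessarily far-apart vertices carrying the same color. Reconciling this rigidity with the two binding size constraints --- the degree of \G and the number of colors both bounded by a function of $|\varphi|$ only --- is where the construction must be arranged with care; everything else is locality bookkeeping of the kind already carried out in Lemma~\ref{lemma:connected-cqs} and Proposition~\ref{prop:connected-cqs_low-degree}.
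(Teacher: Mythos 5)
Your overall route coincides with the paper's: Gaifman normal form plus Theorem~\ref{thm:grohe-low-degree} to eliminate the basic-local sentences, a Feferman--Vaught-style decomposition of the remaining $r$-local formula over the ``blocks'' of close components, a colored graph with one vertex per small connected configuration carrying its local type and its set of responsible positions, a closeness relation $E$ of degree $d^{h(|\varphi|)}$, and $\psi=\psi_1\und\psi_2$ with $\psi_1$ a scatteredness condition and $\psi_2$ a positive combination of color atoms. But the one point you yourself flag as ``the hard part'' --- that $\psi$ accepts \emph{exactly} the image of $\varphi(\A)$ under $f$ --- is a genuine gap, and with the encoding you propose it actually fails. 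If $f$ sends all positions of a common block to the \emph{same} configuration vertex, then (i) $\psi_1=\Und_{i\neq j}\nicht E(x_i,x_j)$ clashes with any reasonable closeness-based $E$, which is reflexive on configuration vertices, so images of answers with a non-singleton block would be rejected; and (ii) even after making $E$ irreflexive, nothing prevents a spurious tuple in which the positions of one block are assigned to \emph{two distinct, far-apart} vertices that both carry the color ``responsible for positions $P_j$'' and the same local type: such a tuple satisfies $\psi_1$ (the vertices are not $E$-adjacent) and every unary atom of $\psi_2$, yet lies outside the image of $f$. Unary colors cannot express that two positions hold the same vertex, so the rigidity you need cannot be recovered from the color annotations alone.

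The paper resolves this with a different encoding of $f$ that you are missing: for the unique partition $P=(P_1,\ldots,P_\ell)$ realized by $\ov{a}$, the tuple $f(\ov{a})$ lists each block-vertex $v_{(\ov{a}_{P_j},\iota_{P_j})}$ exactly \emph{once}, at position $j$, and pads the remaining $k-\ell$ positions with a dedicated dummy vertex carrying its own color $C_\bot$. The disjunct of $\psi_2$ for the partition $P$ then pins down the color of \emph{every} position --- $C_{\iota_{P_j}}$ at position $j\leq\ell$ and $C_\bot$ at positions $j>\ell$ --- so each block corresponds to exactly one coordinate of $\ov{y}$ and can never be split across two vertices; the injections $\iota_{P_j}$ stored in the colors let $f^{-1}$ reassemble $\ov{a}$, and $\psi_1$ over the $\ell$ genuine vertices recovers the separation condition $\delta_P$. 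This padding trick is what reconciles the bijection with the purely unary $\psi_2$; the rest of your argument (type bounds, degree bounds, Storing Theorem for the constant-time evaluation of $f$ and the distance tables) matches the paper's Steps 4 and 5.
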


\noindent
The proof of Proposition~\ref{prop-quantifier-elim-detailed} is long and
technical and of a somewhat different nature than the results we now
describe. It is postponed to Section~\ref{section-main-proofs}. However, this
proposition is central in the proofs of the results below.

\subsection{Counting}\label{subsection:Counting}

Here we consider the problem of counting the number of solutions to a query on low degree structures.

A \emph{generalized conjunction} is a conjunction of relational atoms
and negated relational atoms (hence, also atoms of arity bigger than
one may be negated, and the query of Example~\ref{example-def} is a generalized
conjunction).

\begin{exa}\label{example-counting}
  Before moving to the formal proof of Theorem~\ref{thm:counting}, consider
  again the query $q$ from Example~\ref{example-def}. Recall that it computes
  the pairs of blue-red nodes that are not connected by an edge. To count its number of
  solutions over a class of structures of low degree we can proceed as
  follows.  We first consider the query $q'(x,y)$ returning the set of
  blue-red nodes that \emph{are} connected. In other words, $q'$ is
  \begin{equation*}
 B(x) \land R(y) \land E(x,y).
  \end{equation*}
  Notice that this query is a connected conjunction. Hence, by
  Proposition~\ref{prop:connected-cqs_low-degree} its answers can be computed
  in pseudo-linear time and therefore we can also count its number of solutions
  in pseudo-linear time. It is also easy to compute in pseudo-linear time the
  number of pairs of blue-red nodes. The number of answers to $q$ is then the
  difference between these two numbers.
\end{exa}

The proof sketch for Theorem~\ref{thm:counting} goes as follows. Using
Proposition~\ref{prop-quantifier-elim-detailed} we can assume modulo a
pseudo-linear preprocessing that our formula is quantifier-free and over a
binary signature. Each connected component is then treated separately and we
return the product of all the results. For each connected component we
eliminate the negated symbols one by one using the trick illustrated in
Example~\ref{example-counting}. The resulting formula is then a connected
conjunction that is treated in pseudo-linear time using
Proposition~\ref{prop:connected-cqs_low-degree}.

\begin{lem}\label{lemma:counting-graphs}
 There is an algorithm which, at input of a colored graph $\G$ and a
 generalized conjunction $\gamma(\ov{x})$, computes
 $|\gamma(\G)|$ in time $O(2^m{\cdot}|\gamma|{\cdot}n{\cdot}d^{h(|\gamma|)})$,
 where
 $h$ is a computable function,
 $m$ is the number of negated binary atoms in $\gamma$,
 $n=|\dom(\G)|$, and
 $d=\degree(\G)$.
\end{lem}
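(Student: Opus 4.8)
The plan is to eliminate the negated binary atoms one by one by an inclusion–exclusion argument, reducing the count to a sum of $2^m$ counts of \emph{conjunctions} (in the sense of Section~\ref{subsection:warm-up}, i.e.\ with no negated binary atom), and then to evaluate each of these by decomposing its query graph into connected components and invoking Lemma~\ref{lemma:connected-cqs} on each component. The key point that makes this work for counting is that, although the answer set of a disconnected conjunction may have size $n^k$ (and hence be far too large to enumerate), its cardinality factors as a product of the cardinalities of its connected pieces, each of which is small enough to be listed.

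First I would set up the base case. Suppose $\gamma$ contains no negated binary atom, so $\gamma$ is a conjunction of positive binary atoms together with (possibly negated) unary atoms. Let $H_\gamma$ be its query graph and let $C_1,\dots,C_p$ be the connected components of $H_\gamma$; every atom of $\gamma$ involves variables of a single component, so $\gamma$ splits as $\gamma=\gamma_{C_1}\wedge\cdots\wedge\gamma_{C_p}$, where $\gamma_{C_j}$ collects the atoms on the variables of $C_j$. Since distinct components share no variable, the answer set of $\gamma$ is, up to reordering coordinates, the Cartesian product of the answer sets of the $\gamma_{C_j}$, whence
\[
  |\gamma(\G)| \;=\; \prod_{j=1}^{p} |\gamma_{C_j}(\G)|.
\]
Each $\gamma_{C_j}$ is a connected conjunction (a single variable with only unary constraints, or an empty conjunction ranging over all of $\dom(\G)$, being degenerate but still connected), so by Lemma~\ref{lemma:connected-cqs} its answer set $\gamma_{C_j}(\G)$ can be computed, and hence counted, in time $O(|\gamma|\cdot n\cdot d^{h(|\gamma|)})$; note that this set has size at most $n\cdot d^{h(|\gamma|)}$ precisely because connectivity confines all coordinates of each answer to an $r$-neighborhood of its first coordinate. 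Computing the $p\le|\gamma|$ factors and multiplying them therefore costs $O(|\gamma|\cdot n\cdot d^{h(|\gamma|)})$ after enlarging $h$ to absorb the extra polynomial factor in $|\gamma|$, with the neighborhoods of Lemma~\ref{lemma-compute-neigh} precomputed once and reused.

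For the general case I would recurse on $m$. If $m\ge 1$, pick a negated binary atom $\neg A$ and write $\gamma=\gamma'\wedge\neg A$. For every tuple either $A$ or $\neg A$ holds, so $\gamma'(\G)$ is partitioned into the tuples satisfying $A$ and those satisfying $\neg A$, giving
\[
  |\gamma(\G)| \;=\; |\gamma'(\G)| \;-\; |(\gamma'\wedge A)(\G)|.
\]
Both $\gamma'$ and $\gamma'\wedge A$ are generalized conjunctions of size at most $|\gamma|$ with only $m-1$ negated binary atoms, so I recurse on each and subtract. Unfolding the recursion produces a tree with $2^m$ leaves, each a conjunction with no negated binary atom on which the base case applies in time $O(|\gamma|\cdot n\cdot d^{h(|\gamma|)})$, while the internal nodes perform only subtractions. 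Summing over the leaves yields the claimed bound $O(2^m\cdot|\gamma|\cdot n\cdot d^{h(|\gamma|)})$; equivalently, the same count is the inclusion–exclusion sum $\sum_{S}(-1)^{|S|}\,\big|(\gamma^{+}\wedge\bigwedge_{A\in S}A)(\G)\big|$ over subsets $S$ of the negated binary atoms, where $\gamma^{+}$ is the positive part of $\gamma$. The only real subtlety, and the step deserving care, is the base case: one must justify that the count of a disconnected conjunction is the product over connected components, so that a potentially exponential-size answer set is never materialized, and that each connected factor is genuinely a connected conjunction to which Lemma~\ref{lemma:connected-cqs} applies (this is exactly the trick illustrated in Example~\ref{example-counting}); everything else is bookkeeping.
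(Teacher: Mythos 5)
Your proposal is correct and follows essentially the same route as the paper's proof: the same induction on the number $m$ of negated binary atoms via the identity $|\gamma(\G)|=|\gamma_1(\G)|-|(\gamma_1\wedge R)(\G)|$, and the same base case that factors the count of a disconnected conjunction into the product over connected components, each handled by Lemma~\ref{lemma:connected-cqs}. The inclusion--exclusion reformulation at the end is just an unfolding of the same recursion.
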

\begin{proof}
By induction on the number $m$ of \emph{negated} binary atoms
  in $\gamma$.
  The base case for $m{=}0$ is obtained as follows.
   We start by using $O(|\gamma|)$ steps to compute the query
    graph $H_\gamma$ and to compute the connected components of
    $H_\gamma$.

    In case that $H_\gamma$ is connected, we can use Lemma~\ref{lemma:connected-cqs}
    to compute the entire set $\gamma(\G)$ in time
    $O(|\gamma|{\cdot}n{\cdot}d^{h(|\gamma|)})$, for a computable
    function $h$. Thus, counting $|\gamma(\G)|$ can be done within the
    same time bound.

    In case that $\gamma$ is not connected,
    let $H_1,\ldots,H_\ell$ be the connected components.
    For each $i\in\set{1,\ldots,\ell}$ let $\ov{x}_i$ be the tuple
    obtained from $\ov{x}$ by removing all variables that do not belong
    to $H_i$.
    Furthermore, let $\gamma_i(\ov{x}_i)$ be the conjunction
    of all atoms or negated unary atoms of $\gamma$ that contain variables in
    $H_i$.
    Note that $\gamma(\ov{x})$ is equivalent to
    $\Und_{i=1}^\ell \gamma_i(\ov{x}_i)$,
    and
    \[
    |\gamma(\G)| \ \ = \ \ \prod_{i=1}^\ell |\gamma_i(\G)|.
    \]

    Since each $\gamma_i$ is connected, we can compute $|\gamma_i(\G)|$
    in time $O(|\gamma_i|{\cdot}n{\cdot}d^{h(|\gamma_i|)})$ by using the
    algorithm of Lemma~\ref{lemma:connected-cqs}.
    We do this for each $i\in\set{1,\ldots,\ell}$ and output the
    product of the values. In summary, we are done in time
    $O(|\gamma|{\cdot}n{\cdot}d^{h(|\gamma|)})$ for the base case $m=0$.

  For the induction step, let $\gamma$ be a formula with $m{+}1$
  negated binary atoms. Let $\nicht R(x,y)$ be a negated binary atom
  of $\gamma$, and let $\gamma_1$ be such that
  \begin{eqnarray*}
     \gamma & \ = \ & \gamma_1 \ \und \ \nicht R(x,y),
     \qquad\text{and let}
     \\
     \gamma_2 & \ \deff \ & \gamma_1 \ \und \ R(x,y).
  \end{eqnarray*}
  Clearly, $|\gamma(\G)| = |\gamma_1(\G)| - |\gamma_2(\G)|$.
  Since each of the formulas $\gamma_1$ and $\gamma_2$ has only
  $m$ negated binary atoms, we can use the induction hypothesis to
  compute $|\gamma_1(\G)|$ and $|\gamma_2(\G)|$ each in time
  $O(2^m{\cdot}|\gamma|{\cdot}n{\cdot}d^{h(|\gamma|)})$.
  The total time used for computing $|\gamma(\G)|$ is thus
  $O(2^{m+1}{\cdot}|\gamma|{\cdot}n{\cdot}d^{h(|\gamma|)})$.
\end{proof}

By using Proposition~\ref{prop-quantifier-elim-detailed}, we can lift
this to arbitrary structures and first-order queries:

\begin{prop}\label{prop:counting}
 There is an algorithm which at input of a structure $\A$ and a
 first-order query $\varphi(\ov{x})$ computes $|\varphi(\A)|$ in time
 $h(|\varphi|){\cdot}n{\cdot}d^{h(|\varphi|)}$, for a computable
 function $h$, where $n=|\dom(\A)|$ and $d=\degree(\A)$.
\end{prop}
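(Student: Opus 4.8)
The plan is to reduce the counting problem for an arbitrary first-order query over an arbitrary structure to the counting problem for a generalized conjunction over a colored graph, and then invoke Lemma~\ref{lemma:counting-graphs}. The key enabling tool is the quantifier-elimination result, Proposition~\ref{prop-quantifier-elim-detailed}.

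First I would apply Proposition~\ref{prop-quantifier-elim-detailed} to the input structure $\A$ and the query $\varphi(\ov{x})$ (with some fixed choice of $\e$, say $\e=1$, since in this proposition we only aim for a bound of the shape $h(|\varphi|){\cdot}n{\cdot}d^{h(|\varphi|)}$ and not yet for pseudo-linearity). This produces a binary signature $\tau$, a colored graph $\G$, a quantifier-free formula $\psi(\ov{x})$, and an injective map $f$ that restricts to a \emph{bijection} between $\varphi(\A)$ and $\psi(\G)$. Because $f$ is a bijection between these two answer sets, we immediately get the crucial identity $|\varphi(\A)| = |\psi(\G)|$. Thus it suffices to count the answers of $\psi$ over the colored graph $\G$.

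The next step is to observe that $\psi$, being quantifier-free over a binary signature, is of the special syntactic form guaranteed by item~\ref{item:psi-form:prop-quantifier-elim-detailed}: a conjunction $\psi_1 \und \psi_2$ where $\psi_1$ forbids $E$-edges between distinct free variables and $\psi_2$ is a positive boolean combination of unary atoms. In particular $\psi$ is built from atoms using $\und$, $\oder$, and negation only in front of binary or unary atoms; to apply Lemma~\ref{lemma:counting-graphs} I need a \emph{generalized conjunction}. I would therefore put $\psi$ into disjunctive normal form, obtaining an equivalent disjunction $\bigvee_{j} \gamma_j$ of generalized conjunctions, whose number of disjuncts and whose individual sizes are bounded by a function of $|\psi| \leq h(|\varphi|)$. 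The answer sets $\gamma_j(\G)$ are not disjoint, so I would apply inclusion--exclusion: for each nonempty subset $J$ of the disjuncts, the conjunction $\bigwedge_{j\in J}\gamma_j$ is again (after merging) a generalized conjunction $\gamma_J$, and $|\psi(\G)| = \sum_{\emptyset\neq J}(-1)^{|J|+1}|\gamma_J(\G)|$. Each $|\gamma_J(\G)|$ is computed by Lemma~\ref{lemma:counting-graphs} in time $O(2^{m}{\cdot}|\gamma_J|{\cdot}n'{\cdot}(d')^{h(|\gamma_J|)})$, where $n'=|\dom(\G)|$, $d'=\degree(\G)$, and $m$ counts negated binary atoms. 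Since by items~\ref{item:psi-comput:prop-quantifier-elim-detailed} and~\ref{item:G:prop-quantifier-elim-detailed} we have $n'\leq h(|\varphi|){\cdot}n{\cdot}d^{h(|\varphi|)}$ and $d'\leq d^{h(|\varphi|)}$, and since the number of subsets $J$ and all the quantities $m$, $|\gamma_J|$ are bounded by functions of $|\varphi|$, summing these costs together with the preprocessing cost from Proposition~\ref{prop-quantifier-elim-detailed} yields a total of the form $h'(|\varphi|){\cdot}n{\cdot}d^{h'(|\varphi|)}$ for a suitable computable $h'$, as required.

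The main obstacle I anticipate is purely bookkeeping rather than conceptual: one must verify that converting $\psi$ to a disjunction of generalized conjunctions and running inclusion--exclusion only blows up the formula by a factor depending on $|\varphi|$ alone, so that it can be absorbed into the function $h$, and that the parameters $n'$ and $d'$ of $\G$ propagate correctly into the final bound after substituting the estimates from Proposition~\ref{prop-quantifier-elim-detailed}. There is a subtlety worth flagging: $\gamma_J$ may contain \emph{negated unary} atoms as well, which the definition of generalized conjunction permits, but Lemma~\ref{lemma:counting-graphs} as stated allows negated relational atoms of all arities, so this causes no difficulty. The conceptual content is entirely carried by the bijection $|\varphi(\A)|=|\psi(\G)|$ and by the two already-proved results, so once the normal-form manipulation is checked the proposition follows.
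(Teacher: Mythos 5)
Your proposal follows essentially the same route as the paper: apply Proposition~\ref{prop-quantifier-elim-detailed}, use the bijection $f$ to get $|\varphi(\A)|=|\psi(\G)|$, put $\psi$ into disjunctive normal form, and count each generalized conjunction with Lemma~\ref{lemma:counting-graphs}. The only combinatorial difference is that the paper arranges the DNF clauses to be mutually exclusive, so that $|\psi(\G)|$ is a plain sum of the clause counts, whereas you keep an arbitrary DNF and correct for overlaps by inclusion--exclusion over conjunctions $\gamma_J$; both blow-ups are bounded by a function of $|\varphi|$ alone, so either variant works. One small correction: your parenthetical justification for fixing $\e=1$ is off --- item~\ref{item:f:prop-quantifier-elim-detailed} of Proposition~\ref{prop-quantifier-elim-detailed} builds the data structure for $f$ in time $h(|\varphi|)\cdot n^{1+\e}\cdot d^{h(|\varphi|)}$, which for $\e=1$ is quadratic in $n$ and does not fit the claimed bound $h(|\varphi|)\cdot n\cdot d^{h(|\varphi|)}$. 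The remedy is simply not to build that data structure: for counting you only need the \emph{existence} of the bijection together with $\G$ and $\psi$, and items~\ref{item:psi-comput:prop-quantifier-elim-detailed} and~\ref{item:G:prop-quantifier-elim-detailed} deliver those in time $h(|\varphi|)\cdot n\cdot d^{h(|\varphi|)}$ with no $n^{\e}$ factor, which is exactly what the paper's proof does.
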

\begin{proof}
 We first use the algorithm of
 Proposition~\ref{prop-quantifier-elim-detailed}
 to compute the according graph $\G$ and the quantifier-free formula
 $\psi(\ov{x})$.
 This takes time $h(|\varphi|){\cdot}n{\cdot}d^{h(|\varphi|)}$
 for a computable function $h$.
 And we also know that $|\psi|\leq h(|\varphi|)$.
 % for a computable function $h$.
%
 By Proposition~\ref{prop-quantifier-elim-detailed}
 we know that $|\varphi(\A)|=|\psi(\G)|$.

 Next, we transform $\psi(\ov{x})$ into disjunctive normal form
 \[
   \Oder_{i\in I}\gamma_i(\ov{x}),
 \]
 such that the conjunctive clauses
 $\gamma_i$ exclude each other
 (i.e., for each $\ov{v}\in\psi(\G)$ there is exactly one $i\in I$
 such that $\ov{v}\in \gamma_i(\G)$).
 Clearly, this can be done in time $O(2^{|\psi|})$. Each $\gamma_i$
 has length at most $|\psi|$, and $|I|\leq 2^{|\psi|}$.

 Obviously, $|\psi(\G)|=\sum_{i\in I} |\gamma_i(\G)|$.
 We now use, for each $i\in I$, the algorithm from
 Lemma~\ref{lemma:counting-graphs} to compute the number
 $s_i=|\gamma_i(\G)|$ and output the value $s=\sum_{i\in I}s_i$.

 By Lemma~\ref{lemma:counting-graphs} we know that for each $i\in I$
 computing  $s_i$ can be done in time
 $O(2^m{\cdot}|\gamma_i|{\cdot}\tilde{n}{\cdot}\tilde{d}^{h_0(|\gamma_i|)})$, where
 $m$ is the number of binary atoms in $\gamma$, $\tilde{n}=|\dom(\G)|$,  $\tilde{d}=\degree(\G)$, and
 $h_0$ is some computable function.

 By Proposition~\ref{prop-quantifier-elim-detailed} we know that
 $\tilde{n}\leq h(|\varphi|){\cdot}n{\cdot}d^{h(|\varphi|)}$ and
 $\tilde{d}\leq d^{h(|\varphi|)}$.
Since also $|\gamma_i|\leq |\psi|\leq h(|\varphi|)$,  the computation
of $s_i$, for each $i\in I$, takes time
 $h_1(|\varphi|){\cdot}n{\cdot}d^{h_1(|\varphi|)}$, for some computable
 function $h_1$ (depending on $h$ and $h_0$).

 To conclude, since $|I|\leq 2^{|\psi|}$, the total  running time for the computation of $|\varphi(\A)| =
  \sum_{i\in I}s_i$ is
  $h_2(|\varphi|){\cdot}n{\cdot}d^{h_2(|\varphi|)}$, for a suitably chosen computable
  function $h_2$. Hence, we meet the required bound.
\end{proof}

Theorem~\ref{thm:counting} is an immediate consequence of
Proposition~\ref{prop:counting}, following arguments similar with the proof of
Proposition~\ref{prop:connected-cqs_low-degree}: For a given $\epsilon>0$ we
let $\delta\deff \frac{\epsilon}{h(|\varphi|)}$, where $h$ is the function of
Proposition~\ref{prop:counting}, and define $n_\epsilon\deff m_\delta$.  Then,
every $\A\in\Class$ with $|\A|\geq n_\epsilon$ has
$\degree(\A)\leq |\A|^{\epsilon/h(|\varphi|)}$.  Thus, on input of $\A$ and $\varphi$, the
algorithm from Proposition~\ref{prop:counting} has running time
$O(h(|\varphi|){\cdot} |\A|^{1+\epsilon})$ if $|\A| \geq n_\epsilon$ and takes time
bounded by $h(|\varphi|)\cdot n_\epsilon^{1+h(|\varphi|)}$ otherwise. This gives the bounds
claimed by the proposition with a computable function $g$ as soon as we can
compute $n_\e$.

\subsection{Testing}\label{subsection:Testing}

Here we consider the problem of testing whether a given tuple is a solution to a query.
By Proposition~\ref{prop-quantifier-elim-detailed} it is enough to consider
quantifier-free formulas. Those are treated using the data structure computed
by Corollary~\ref{cor-test}.

\begin{prop}\label{prop:testing}
 There is an algorithm which at input of a structure $\A$, a
 first-order query $\varphi(\ov{x})$, and an $\e>0$ has a preprocessing phase of
 time  $g(|\varphi|,\e){\cdot}n^{1+\e}{\cdot}d^{g(|\varphi|,\e)}$
 in which it computes a data structure such that, on input of any
 $\ov{a}\in\dom(\A)^k$ for $k=|\ov{x}|$, it can be tested in time
 $g(|\varphi|,\e)$ whether $\ov{a}\in\varphi(\A)$,
 where
 $g$ is a computable function,
 $n=|\dom(\A)|$, and
 $d=\degree(\A)$.
\end{prop}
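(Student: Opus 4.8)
The plan is to reduce the testing problem over $\A$ and $\varphi$ to the testing problem over the colored graph $\G$ and the quantifier-free formula $\psi$ supplied by Proposition~\ref{prop-quantifier-elim-detailed}, and then to exploit that a quantifier-free formula can be evaluated atom-by-atom in constant time using the data structure of Corollary~\ref{cor-test}.

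First I would run the preprocessing of Proposition~\ref{prop-quantifier-elim-detailed} on $\A$, $\varphi$ and $\e$ to obtain the binary signature $\tau$, the colored graph $\G$, the quantifier-free formula $\psi(\ov{x})$, and the data structure representing the injective mapping $f\colon\dom(\A)^k\to\dom(\G)^k$; this costs time $h(|\varphi|){\cdot}n^{1+\e}{\cdot}d^{h(|\varphi|)}$. On top of this I would apply Corollary~\ref{cor-test} to $\G$, obtaining a data structure that tests in time $O(1)$ whether a given pair of vertices, or a single vertex, satisfies a relation symbol of $\tau$. By Proposition~\ref{prop-quantifier-elim-detailed} we have $|\dom(\G)|\leq h(|\varphi|){\cdot}n{\cdot}d^{h(|\varphi|)}$ and $\degree(\G)\leq d^{h(|\varphi|)}$, and $\tau$ is binary; hence applying Corollary~\ref{cor-test} with parameter $\e$ costs time $g(|\varphi|,\e){\cdot}n^{1+\e}{\cdot}d^{g(|\varphi|,\e)}$, where all the blow-up in $d$ and in $h(|\varphi|)$ is absorbed into the function $g$ while the exponent of $n$ stays at $1+\e$. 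Both steps together therefore fit the claimed preprocessing bound.

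For the testing phase, on input $\ov{a}\in\dom(\A)^k$, I would first compute $\ov{v}\deff f(\ov{a})$ in time $O(k^2)$ using the data structure for $f$. The crucial observation is that $\ov{a}\in\varphi(\A)$ if and only if $\ov{v}\in\psi(\G)$: if $\ov{a}\in\varphi(\A)$ then $f(\ov{a})\in\psi(\G)$ since $f$ maps $\varphi(\A)$ into $\psi(\G)$; conversely, if $\ov{v}\in\psi(\G)$ then, as $f$ is a \emph{bijection} between $\varphi(\A)$ and $\psi(\G)$, there is some $\ov{b}\in\varphi(\A)$ with $f(\ov{b})=\ov{v}=f(\ov{a})$, and injectivity of $f$ forces $\ov{a}=\ov{b}\in\varphi(\A)$. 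It then remains to decide $\ov{v}\in\psi(\G)$. Since $\psi$ is quantifier-free of the form $(\psi_1\und\psi_2)$, this amounts to checking the at most $O(k^2)$ non-edge conditions between the components of $\ov{v}$ demanded by $\psi_1$, and evaluating a positive boolean combination of at most $|\psi|$ unary atoms for $\psi_2$; each individual atom test is answered in time $O(1)$ through the Corollary~\ref{cor-test} data structure. The whole test thus runs in time $O(k^2+|\psi|)$, which is bounded by $g(|\varphi|,\e)$ as $k\leq|\varphi|$ and $|\psi|\leq h(|\varphi|)$.

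I do not expect a genuine obstacle here, since the heavy lifting is encapsulated in Proposition~\ref{prop-quantifier-elim-detailed}. The only point that requires care is that the equivalence $\ov{a}\in\varphi(\A)\iff f(\ov{a})\in\psi(\G)$ must hold for \emph{all} tuples rather than merely those in $\varphi(\A)$; this is exactly where the injectivity of $f$ on the whole of $\dom(\A)^k$, and not just the bijection onto $\psi(\G)$, is essential. A secondary, purely bookkeeping, point is choosing the $\e$ passed to the two preprocessing steps so that the resulting $n$-exponent remains $1+\e$.
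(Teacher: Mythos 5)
Your proposal is correct and follows essentially the same route as the paper's own proof: preprocess via Proposition~\ref{prop-quantifier-elim-detailed}, apply Corollary~\ref{cor-test} to the resulting colored graph $\G$, and at query time map $\ov{a}$ to $f(\ov{a})$ and evaluate the quantifier-free $\psi$ atom by atom. Your explicit justification of the equivalence $\ov{a}\in\varphi(\A)\iff f(\ov{a})\in\psi(\G)$ via injectivity of $f$ on all of $\dom(\A)^k$ is exactly the property the paper relies on (and states without elaboration), so there is no gap.
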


\begin{proof}
% As for the counting case,
  Fix $\e >0$.

 We first use the algorithm of
 Proposition~\ref{prop-quantifier-elim-detailed}
 to compute the graph $\G$, the quantifier-free formula
 $\psi(\ov{x})$ and the data structure for function $f$.
 For some computable function $h$,
 all of this is done within time
  $h(|\varphi|){\cdot}n^{1+\e}{\cdot}d^{h(|\varphi|)}$,
 and furthermore, $|\psi|\leq h(|\varphi|)$
 and $\degree(\G)\leq d^{h(|\varphi|)}$.
 Note that $\size{\G}\leq
 h(|\varphi|){\cdot}n{\cdot}d^{h(|\varphi|)}$.
 By construction, we furthermore know for all $\ov{a}\in\dom(\A)^k$
 that
 $\ov a \in \varphi(\A) \iff f(\ov a)\in \psi(\G)$.

Recall from  Proposition~\ref{prop-quantifier-elim-detailed} that
$\psi(\ov{x})$ is a quantifier-free formula built from atoms of the
form $E(y,z)$ and $C(y)$ for unary relation symbols $C$.
Thus, checking whether a given tuple $\ov{v}\in\dom(\G)^k$ belongs to
$\psi(\G)$ can be done easily, provided that one can check whether unary
atoms $C(u)$ and binary atoms $E(u,u')$ hold in $\G$ for given nodes
$u,u'$ of $\G$.

Let $\tilde{n}=|\dom(\G)|$ and $\tilde{d}=\degree(\G)$. Recall that
$\tilde{n}\leq h(|\varphi|){\cdot}n{\cdot}d^{h(|\varphi|)}$ and $\tilde{d}\leq
d^{h(|\varphi|)}$. We apply Corollary~\ref{cor-test}
to $\G$ and $\e$.

This gives an extra preprocessing time of
$O({\tilde{d}}^{\tilde{r}}\cdot\tilde{n}^{1+\e})$, where $\tilde{r}$
only depends on $\tau$, i.e., $\tilde{r}$ is bounded by
$\tilde{h}(|\varphi|)$ for some computable function $\tilde{h}$.
Inserting the known bounds on $\tilde{r}$, $\tilde{n}$, and
$\tilde{d}$ shows that this extra preprocessing time is in
$O(g(|\varphi|,\e) \cdot n^{1+\e} \cdot d^{g(|\varphi|,\e)})$ for some
computable function $g$.

Finally, the testing algorithm works as follows. Given a tuple $\ov{a}\in\dom(\A)^k$, we
first construct $\ov{v}\deff f(\ov a)$ and then check whether $\ov{v}\in \psi(\G)$.
Building $\ov{v}\deff f(\ov a)$ can be done in time $O(k^2)$ (see
Proposition~\ref{prop-quantifier-elim-detailed}),
and by Corollary~\ref{cor-test} checking whether
$\ov{v}\in \psi(\G)$ depends only on  $\psi$, $\e$ and $\sigma$.
Hence, we meet the required bound for testing.
\end{proof}

Theorem~\ref{thm:testing} is an immediate consequence of
Proposition~\ref{prop:testing} using the usual argument: For a given
query $\phi$ and an
$\epsilon>0$ we let $\delta\deff \frac{\epsilon}{2g(|\varphi|,\e/2)}$, where $g$ is
the function of Proposition~\ref{prop:testing}, and define
$n_\epsilon\deff m_\delta$.  Then, every $\A\in\Class$ with
$|\A|\geq n_\epsilon$ has $\degree(\A)\leq |\A|^{\epsilon/2g(|\varphi|,\e/2)}$.
Thus, on input of $\A$, $\varphi$ and $\e/2$, the testing algorithm from
Proposition~\ref{prop:testing} has preprocessing time
$O(g(|\varphi|,\e/2){\cdot} |\A|^{1+\epsilon})$ if $|\A| \geq n_\epsilon$ and takes
time bounded by $g(|\varphi|,\e/2)\cdot n_\epsilon^{1+\e/2+g(|\varphi|,\e/2)}$
otherwise, and it has testing time $O(g(|\varphi|,\e/2))$. This
gives the bounds claimed by the theorem with a computable function as
soon as we can compute $n_\e$.

\subsection{Enumeration}\label{subsection:Enumeration}

Here we consider the problem of enumerating the solutions to a given query.  We
first illustrate the proof of Theorem~\ref{thm:enum} with our running example.

\begin{exa}\label{example-enum}
  Consider again the query $q$ of Example~\ref{example-def}.
  In order to enumerate $q$ with constant delay
  over a class of low degree we proceed as
  follows. During the preprocessing phase we precompute those blue nodes that
  contribute to the answer set, i.e.\ such that there is a red node not
  connected to it. This is doable in pseudo-linear time because our class has
  low degree and each blue node is connected to few red nodes. We call green
  the resulting nodes.  We then order the green nodes and the red nodes in
  order to be able to iterate through them with constant delay.  Finally, we
  compute the binary function $\fun{skip}(x,y)$ associating to each green node
  $x$ and red node $y$ such that $E(x,y)$ the smallest red node $y'$ such that
  $y<y'$ and $\lnot E(x,y')$, where $<$ is the order on red nodes precomputed
  above. From Proposition~\ref{prop:connected-cqs_low-degree} it follows that
  computing $\fun{skip}$ can be done in pseudo-linear time. It is crucial here
  that the domain of $\fun{skip}$ has pseudo-linear size and this is a
  consequence of the low degree.

  The enumeration phase now goes as follows: We iterate through all green
  nodes. For each of them we iterate through all red nodes. If there is no edge
  between them, we output the result and continue with the next red node. If
  there is an edge, we apply $\fun{skip}$ to this pair and the process
  continues with the resulting red node. Note that the new red node immediately
  yields an answer. Note also that all the red nodes that will not be considered are safely
  skipped as they are linked to the current green node.
\end{exa}

The proof of Theorem~\ref{thm:enum} can be sketched as follows.
By Proposition~\ref{prop-quantifier-elim-detailed} it is enough to consider
quantifier-free formulas looking for tuples of nodes that are disconnected and
have certain colors. Hence the query $q$ described in Example~\ref{example-def}
corresponds to the binary case. For queries of larger arities we proceed by
induction on the arity. If $q$ is given by the formula $\varphi(\bar x y)$ we
know by induction that, modulo some preprocessing pseudo-linear in the size of
the input database $D$, we can enumerate with constant delay all tuples
$\bar a$ satisfying $D \models \exists y \varphi(\bar x y)$. For each such
tuple $\bar a$ it remains to enumerate all $b$ such that
$D\models \varphi(\bar a b)$. We then proceed as in
Example~\ref{example-enum}. Starting from an arbitrary node $b$ of the
appropriate color, we iterate the following reasoning. If the current node $b$
is not connected to $\bar a$, then $\bar a b$ forms an answer and we proceed to
the next $b$. If $b$ is connected to $\bar a$ then we need to jump in constant
time to the next node of the appropriate color forming a solution. This is done
by precomputing a suitable function $\fun{skip}$ that depends on the arity of
the query and is slightly more complex that the one described in
Example~\ref{example-enum}. The design and computation of this function is the
main technical originality of the proof. The fact that the database has low
degree implies that for each tuple $\bar a$ there are few nodes $b$ that are
connected to $\bar a$. This makes the computation efficient.

The technical details are summarized in the following proposition.

\begin{prop}\label{prop:enumeration}
 There is an algorithm which at input of a structure $\A$, a
 first-order query $\varphi(\ov{x})$, and $\e>0$ enumerates $\varphi(\A)$ with
 delay $h(|\varphi|,\e)$ after a preprocessing of time
 $h(|\varphi|,\e){\cdot}n^{1+\e}{\cdot}d^{h(|\varphi|,\e)}$, where
 $n=|\dom(\A)|$, $d=\degree(\A)$, and  $h$ is a computable function.
\end{prop}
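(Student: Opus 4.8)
The plan is to reduce, via Proposition~\ref{prop-quantifier-elim-detailed}, to the enumeration of a quantifier-free query over a colored graph, and then to enumerate by induction on the arity, using precomputed \fun{skip} functions to leap over the ``false hits'' produced by the low degree. First I would run the quantifier-elimination algorithm on $\A$, $\varphi(\ov x)$ and $\e$ to obtain a colored graph $\G$, the quantifier-free formula $\psi(\ov x)=(\psi_1\und\psi_2)$, and the data structure for the bijection $f$ between $\varphi(\A)$ and $\psi(\G)$. Since $f^{-1}$ is computable in time $O(k^2)$, it suffices to enumerate $\psi(\G)$ and apply $f^{-1}$ to each output, which costs only a constant per answer. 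Recall that $\psi_1$ forces the free variables to be pairwise non-$E$-adjacent and that $\psi_2$ is a positive boolean combination of unary atoms. I would rewrite $\psi_2$ by partitioning answers according to the \emph{color type} (the set of satisfied unary predicates) of each component: for every color-type vector $(t_1,\dots,t_k)$ that makes $\psi_2$ true, one enumerates the $E$-independent tuples whose $j$-th component has color type exactly $t_j$. These families are pairwise disjoint and cover $\psi(\G)$, so concatenating their enumerations produces no repetition, and it remains to enumerate, for a fixed color-type vector, the pairwise non-adjacent tuples with prescribed colors.

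Next I would set up the induction on $k=|\ov x|$. The base case $k=1$ is immediate: during preprocessing, sort the nodes by color type and iterate through the relevant list with constant delay. For the inductive step, write $\ov x=(\ov y,z)$ with $z$ the last variable. The query $\exists z\,\psi$ is a $(k{-}1)$-ary \FO query over the structure $\G$ (whose degree is $\leq d^{h(|\varphi|)}$ by Proposition~\ref{prop-quantifier-elim-detailed}), so the induction hypothesis enumerates, with constant delay after pseudo-linear preprocessing, exactly those prefixes $\ov b$ that admit at least one legal value of $z$. For each enumerated $\ov b$ I then have to output the valid extensions, namely the nodes of the right color that are not $E$-adjacent to any component of $\ov b$. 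The naive strategy iterates through the sorted color list and tests adjacency in time $O(1)$ via Corollary~\ref{cor-test}; by the low-degree assumption the number of \emph{bad} candidates (those adjacent to some component of $\ov b$) is at most $(k{-}1)\tilde d$ with $\tilde d=\degree(\G)$, but these may cluster, so the delay between two genuine answers is not yet constant.

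This is where the \fun{skip} functions enter. For each component role I would precompute a binary function sending a node $u$ together with one of its same-color neighbours $w$ to the smallest color node $>w$ that is \emph{not} $E$-adjacent to $u$. Its domain is the set of adjacent (node, colored-neighbour) pairs, which by low degree has pseudo-linear size $\leq \tilde n\,\tilde d$, and it is essentially a connected conjunctive query, hence computable in pseudo-linear time by Proposition~\ref{prop:connected-cqs_low-degree} and storable for $O(1)$ lookup by Theorem~\ref{thm-storing-complete}. Given a bad candidate $z$ for $\ov b$, I would jump by taking, among the components $v_i$ adjacent to $z$, the largest \fun{skip} value; one checks that this never overshoots a legal $z$ (every node skipped is adjacent to the component realizing the maximum) and strictly increases $z$, so iterating it correctly reaches the next answer.

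The hard part will be the \emph{delay}: a maximal run of bad candidates can be an interleaving of the neighbourhoods of several components $v_i$, and the combination just described may need a number of jumps growing with $\tilde d$ to cross one such run, which only yields amortized, not worst-case, constant delay. Overcoming this — engineering a single \fun{skip} (depending on the arity, as the paper indicates) that crosses an entire maximal bad run in time bounded solely by $|\varphi|$ and $\e$, while keeping its domain pseudo-linear so that it can still be built in pseudo-linear time — is the genuine obstacle and where I expect the real work to lie; here the low-degree hypothesis must be exploited twice, to bound the relevant neighbourhoods and to control the size of the stored function. Once this is in place, the preprocessing time is the sum of the pseudo-linear costs of quantifier elimination, of the inductive preprocessing, and of the \fun{skip} precomputation, all of the form $h(|\varphi|)\cdot n^{1+\e}\cdot d^{h(|\varphi|)}$, while the delay depends only on $|\varphi|$ and $\e$; Theorem~\ref{thm:enum} then follows by converting the $d^{h(|\varphi|)}$ factor into $n^{\e}$ exactly as in the proof of Proposition~\ref{prop:connected-cqs_low-degree}.
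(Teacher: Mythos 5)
Your proposal correctly reproduces the architecture of the paper's proof: reduce via Proposition~\ref{prop-quantifier-elim-detailed} to a quantifier-free query $(\psi_1\und\psi_2)$ on a colored graph, split into mutually exclusive disjuncts, induct on the arity by enumerating the $(k{-}1)$-prefixes with the induction hypothesis, and extend each prefix by iterating through a color list while skipping over candidates adjacent to some component of the prefix. But at the decisive step you stop short: you propose a \emph{binary} skip (one node of the prefix at a time, combined by taking the maximum over the adjacent components) and then correctly observe yourself that a maximal run of bad candidates can interleave the neighbourhoods of several components, so that this combination needs up to $\Theta(\tilde d)$ jumps and yields only amortized, not worst-case, constant delay. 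Since worst-case constant delay is exactly what the Proposition asserts, this is a genuine gap, not a detail.

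The missing idea in the paper is a skip function indexed by a \emph{set}: $\fun{skip}(y,V)$ returns the smallest $z\geq_\G^P y$ in the color list that is non-adjacent to \emph{every} node of $V$, so a single $O(1)$ lookup crosses the entire maximal bad run. The obstacle you would face --- that tabulating this for all sets $V$ of up to $k-1$ arbitrary nodes costs $n^{k-1}$ --- is resolved by the auxiliary formulas $E_1,\dots,E_k$, defined inductively using the successor relation $\next$ of the color list ($E_{i+1}(u,y)$ holds if $E_i(u,y)$ does, or if some $z$ adjacent to $u$ has a predecessor $z'$ in the list with a neighbour $v$ satisfying $E_i(v,y)$). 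These capture exactly the components that can ever matter during a skip chain started at $y$; they force such components to lie within distance $<3k$ of $y$, so only $\tilde d^{O(k^2)}$ sets $V$ per $y$ need to be stored, keeping the domain of $\fun{skip}$ pseudo-linear and storable via Theorem~\ref{thm-storing-complete}. The correctness argument (any component adjacent to the landing point $z$ but outside $V$ would, via $z$, the skipped predecessor $z'$, and its neighbour $v\in V$, witness membership in $E_k$ relative to $y$, a contradiction) is precisely what justifies restricting $V$ to the $E_k$-neighbours of $y$. Without this construction your proof does not establish the stated delay bound.
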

\begin{proof}
  The proof is by induction on the number $k\deff |\ov{x}|$ of free variables of $\varphi$.
 In case that $k=0$, the formula $\varphi$ is a sentence, and we are done using
 Theorem~\ref{thm:grohe-low-degree}.
 In case that $k>0$ we proceed as follows.

 We first use the algorithm of Proposition~\ref{prop-quantifier-elim-detailed}
 to compute the according colored graph $\G$, the quantifier-free formula
 $\psi(\ov{x})$, and the data structure representing $f$.  This takes time
 $g(|\varphi|){\cdot}n^{1+\e}{\cdot}d^{g(|\varphi|)}$ for a computable function
 $g$.  And we know that $|\psi|\leq g(|\varphi|)$, that $\G$ has degree
 $\tilde{d}\leq d^{g(|\varphi|)}$, and that $\dom(\G)$ has $\tilde{n}$
 elements, where $\tilde{n}\leq g(|\varphi|){\cdot}n{\cdot}d^{g(|\varphi|)}$.

 From Item~\ref{item:psi-form:prop-quantifier-elim-detailed} of
 Proposition~\ref{prop-quantifier-elim-detailed} we know that the formula
 $\psi(\ov{x})$ is of the form\,$(\psi_{1} \und \psi_{2})$, where $\psi_{1}$
 states that no distinct free variables of $\psi$ are connected by an $E$-edge
 and $\psi_{2}$ is a positive boolean combination of unary atoms.

 We now prove the Proposition in the case of $\G$, i.e.\ we enumerate
 $\psi(\G)$ and go back to $\A$ using $f$ in constant
 time by Item~\ref{item:f:prop-quantifier-elim-detailed} of
 Proposition~\ref{prop-quantifier-elim-detailed}. We assume an arbitrary linear
 order $\leq_\G$ among the nodes of $\G$.

 In case that $k=1$, $\psi(x_1)=\psi_2(x_1)$ is a positive boolean combination
 of unary atoms. We can use Lemma~\ref{lemma:connected-cqs} for each unary atom
 in order to compute $\psi(\G)$ in time
 $O(|\psi|{\cdot}\tilde{n}{\cdot}\tilde{d}^{g(|\psi|)})$ for a computable
 function $g$. From this the constant delay enumeration is immediate.

 Altogether the preprocessing time is in
 $h(\varphi){\cdot}n^{1+\e}{\cdot}d^{h(\varphi)}$, for a computable function $h$ as
 required. The delay is $O(1)$, and we  are done for $k=1$.

 The case $k>1$ requires much more elaborate constructions.

 We let $\ov{x}=(x_1,\ldots,x_k)$ and $\ov{x}_{k-1}\deff (x_1,\ldots,x_{k-1})$.
 We first transform $\psi$ into a normal form $\Oder_{j\in J}\theta_j(\ov{x})$
 such that the formulas $\theta_j$ exclude each other (i.e., for each
 $\ov{v}\in\psi(\G)$ there is exactly one $j\in J$ such that
 $\ov{v}\in\theta_j(\G)$), and each $\theta_j(\ov{x})$ is of the form
 \[
  \begin{array}{c}
    \phi_j(\ov{x}_{k-1}) \ \und \
     P_j(x_k) \ \und\ \gamma(\ov{x}), \qquad\text{where}
\\[2ex] \displaystyle
  \gamma(\ov{x}) \ \ \deff \ \
  \Und_{i=1}^{k-1} \big(\,\nicht E(x_i,x_k)\, \und\, \nicht
  E(x_k,x_i)\,\big),
\end{array}
\]
$P_j(x_k)$ is a boolean combination of unary atoms regarding $x_k$, and
$\phi_j(\ov{x}_{k-1})$ is a formula with only $k{-}1$ free variables.
Note that the transformation into this normal form can be done
easily, using the particularly simple form of the formula $\psi$.

As the $\theta_j$ are mutually exclusive, we can enumerate $\psi(\G)$ by
enumerating for each $j\in J$, $\theta_j(\G)$.

In the following, we therefore restrict attention to $\theta_j$ for a fixed $j\in J$. For
this $\theta_j$ we shortly write
\[
  \theta(\ov{x}) \ = \ \
    \phi(\ov{x}_{k-1}) \ \und \
     P(x_k) \ \und\ \gamma(\ov{x}).
\]
We let
\ \(
    \theta'(\ov{x}_{k-1}) \ \deff \
    \exists x_k\, \theta(\ov{x}).
 \) \

 By induction hypothesis (since $\theta'$ only has $k{-}1$ free variables) we
 can enumerate $\theta'(\G)$ with delay $h(|\theta'|,\e)$ and preprocessing
 $h(|\theta'|,\e){\cdot}{\tilde n}^{1+\e}{\cdot}\tilde{d}^{h(|\theta'|,\e)}$

Since $P(x_k)$ is a boolean combination of unary atoms on $x_k$, we
can use Lemma~\ref{lemma:connected-cqs} to compute $P(\G)$
 in time $O(|P|{\cdot}\tilde{n}{\cdot}\tilde{d}^{g(|P|)})$ for a
 computable function $g$.
Afterwards, we have available a list of all nodes $v$ of $\G$ that
belong to $P(\G)$. In the following, we will write $\leq_\G^P$ to denote the
linear ordering of $P(\G)$ induced by $\leq_\G$ on $P(\G)$, and we write
$\first_\G^P$ for the first element in this list, and $\next^P_\G$ for the
successor function, such that for any node $v\in P(\G)$,
$\next_\G^P(v)$ is the next node in $P(\G)$ in this list (or the value
$\void$, if $v$ is the last node in the list).

We extend the signature of $\G$ by a unary relation symbol $P$ and a
binary relation symbol $\next$, and let $\hat{\G}$ be the expansion of
$\G$ where $P$ is interpreted by the set $P(\G)$ and $\next$ is
interpreted by the successor function $\next_\G^P$ (i.e., $\next(v,v')$
is true in $\hat{\G}$ iff $v'=\next_\G^P(v)$). Note that $\hat{\G}$ has
degree at most $\hat{d}=\tilde{d}{+}2$.

We now start the key idea of the proof, i.e., the function that will help us
skipping over irrelevant nodes. To this end consider the first-order
formulas $E_1,\ldots,E_k$ defined inductively as follows, where $E'(x,y)$ is an
abbreviation for $(E(x,y)\oder E(y,x))$. The reason for defining these formulas
will become clear only later on, in the proof.
\[
  E_1(u,y) \ \deff \ \ E'(u,y), \qquad\text{and}
\]
\[
 \begin{array}{l}
  E_{i+1}(u,y) \ \deff \ E_i(u,y) \ \ \oder \ \
  \exists z\exists z'\exists v \, \big(E'(z,u) \und \next(z',z) \und E'(v,z') \und E_i(v,y)
  \big).
 \end{array}
\]

 \noindent
 A simple induction shows that for $E_i(u,y)$ to hold, $y$
 must be at distance $\leq 3(i{-}1)+1 < 3i$ from $u$.

 In our algorithm we will have to test, given $i\leq k$ and nodes $u,v\in\dom(\G)$,
 whether $(u,v)\in E_i(\hat{\G})$.
 Since $E_i$ is a first-order formula,
 Proposition~\ref{prop:testing} implies that,
 after a preprocessing phase using time
 $g'(|E_i|,\e){\cdot}\tilde{n}^{1+\e}{\cdot}\hat{d}^{g'(|E_i|,\e)}$ (for some computable
 function $g'$), testing membership in $E_i(\hat{\G})$, for any given $(u,v)\in\dom({\G})^2$,
is possible within time $g'(|E_i|,\e)$.

The last step of the precomputation phase computes the function
$\fun{skip}$ that associates to each node $y\in P(\G)$ and each set $V$
of at most $k{-}1$ nodes that are related to $y$ via $E_k$, the smallest (according
to the order $\leq_\G^P$ of $P(\G)$) element $z\geq_\G^P y$ in $P(\G)$ that is
\emph{not} connected by an $E$-edge to any node in $V$. More precisely:
For any node $y\in P(\G)$ and any set $V$ with $0\leq |V| < k$ and
$(v,y)\in E_k(\hat{\G})$ for all $v\in V$, we let
\begin{equation*}
\fun{skip}(y,V)\ := \ \
\min\{z \in P(\G) \ : \  y \,{\leq_\G^P} z \ \text{ and } \
 \forall v\in
  V:\  (v,z)\not\in E'(\G)\},
\end{equation*}
respectively, $\fun{skip}(y,V)\deff\void$ if no such $z$ exists.

Notice that the nodes of $V$ are related to $y$ via $E_k$ and hence are at
distance $<3k$ from $y$.
Hence for each $y$, we only need to consider at most
${\hat{d}}^{(3k^2)}$ such sets $V$.

For each set $V$, $\fun{skip}(y,V)$ can be computed  by running
consecutively through all nodes $z\geq_\G^P y$ in the list $P(\G)$ and test
whether $E'(z,v)$ holds for some $v\in V$. This can be done in constant time as
we have done the preprocessing for testing for all of the $E_i$.

Since $|V|\leq k$ and each
$v\in V$ is of degree at most $\tilde{d}$ in $\G$, the value  $\fun{skip}(y,V)$  can be
found in time $O(k^2{\cdot} \tilde{d})$. Therefore, the entire
$\fun{skip}$-function can be computed, and stored in a data structure by Theorem~\ref{thm-storing-complete}, in time
$O(\tilde{n}^{1+\e} {\cdot} {\hat{d}}^{(3k^2)}{\cdot}
g''(|\varphi|,\e))$ for some computable function $g''$. Later, given $y$ and $V$ as above, the value
$\fun{skip}(y,V)$ can be looked-up within constant time.

We are now done with the preprocessing phase.
%(see Appendix~\ref{appendix:enum:preprocessing} for a summary of all
%steps performed during this phase).
%
 Altogether it took

 \begin{enumerate}
 \item
  the time to compute $\psi$ and $\G$, which is
  $g(|\varphi|){\cdot}n^{1+\e}{\cdot}d^{g(|\varphi|)}$, for a computable
  function $g$

 \item
  the time to compute $\Oder_{j\in J}\theta_j$, which is
  $g(|\varphi|)$, for a computable function $g$

 \item
  for each $j\in J$ and $\theta\deff \theta_j$, it took

 \begin{enumerate}

 \item
  the preprocessing time for $\theta'(\G)$,\\
  which is by induction
  $h(|\theta'|,\e){\cdot}\tilde{n}^{1+\e}{\cdot}\tilde{d}^{h(|\theta'|,\e)}$,
  for the computable function $h$ in the Proposition's statement

 \item
  the time for computing $P(\G)$, which is\\
  $g(|\varphi|){\cdot}\tilde{n}{\cdot}\tilde{d}^{g(|\varphi|)}$, for a computable
  function $g$

\item for all $i\leq k$, the preprocessing time for testing membership in
  $E_i(\hat{\G})$, which can be done in time
  $g'(|E_i|,\e){\cdot}\tilde{n}^{1+\e}{\cdot}\hat{d}^{g'(|E_i|,\e)}$, for a computable function $g'$

 \item and
  the time for computing the $\fun{skip}$-function and to compute the
  associated data structure, which is $g''(|\varphi|,\e){\cdot}\tilde{n}^{1+\e}{\cdot} \tilde{d}^{g''(|\varphi|,\e)}$, for a
  computable function $g''$.

 \end{enumerate}
\end{enumerate}

\noindent
It is straightforward to see that, by suitably choosing the computable
function $h$, all the preprocessing steps can be done within time
$h(|\varphi|,\e){\cdot}n^{1+\e}{\cdot}d^{h(|\varphi|,\e)}$.

\medskip

We now turn to the enumeration procedure.
As expected we enumerate all tuples $\bar u \in \psi(\G)$ and return
$f^{-1}(\bar u)$.

In order to enumerate $\psi(\G)$ it suffices to enumerate $\theta_j(\G)$ for
each $j$. Fix $j$ and let $\theta=\theta_j$ and $\theta'$ be as in the
preprocessing phase. The enumeration of $\theta(\G)$ is done as follows.

\begin{enumerate}
  \item[1.]
    Let $\ov{u}$ be the first output produced in
    the enumeration of $\theta'(\G)$.\\
    If $\ov{u}=\void$ then STOP with output $\void$,\\
    else let
    $(u_1,\ldots,u_{k-1})=\ov{u}$ and goto line 2.
  \item[2.]
    Let $y\deff \first^P_\G$ be the first element in the list $P(\G)$.
  \item[3.]
    Let $V\deff\setc{v\in\set{u_1,\ldots,u_{k-1}}}{(v,y)\in E_k(\hat{\G})}$.
  \item[4.]
    Let $z\deff\fun{skip}(y,V)$.
  \item[5.]
    If $z\neq\void$ then OUTPUT $(\ov{u},z)$
    and goto line~9.
  \item[6.]
    If $z= \void$ then
   \begin{enumerate}
     \item[7.]
      Let $\ov{u}'$ be the next output produced in the enumeration
      of $\theta'(\G)$.
     \item[8.]
      If $\ov{u}'=\void$ then STOP with output $\void$, \\
      else let $\ov{u}\deff \ov{u}'$ and goto line 2.
   \end{enumerate}
  \item[9.]
    Let $y\deff \next^P_\G(z)$.
  \item[10.]
    If $y=\void$ then goto line 7, else goto line 3.
\end{enumerate}

\noindent
We prove that the above process enumerates $\theta(\G)$ with constant delay.

To see this, notice first that the algorithm never outputs any tuple more than once.
Before proving that this algorithm enumerates exactly the tuples in
$\theta(\G)$, let us first show that it operates with delay at most $h(|\varphi|,\e)$.

By the induction hypothesis, the execution of line~1 and
 each execution of line~7 takes time at most $h(|\theta'|,\e)$.
Furthermore, each execution of line~3 takes time
$(k{-}1){\cdot}g'(|E_k|,\e)$. Concerning the remaining lines of the
algorithm, each execution can be done in time $O(1)$.

Furthermore, before outputting the first tuple, the algorithm executes
at most 5 lines (namely, lines 1--5; note that by our choice of the
formula $\theta'$ we know that when entering line~5 before outputting
the first tuple, it is guaranteed that $z\neq\void$, hence an output
tuple is generated).

Between outputting two consecutive tuples, the algorithm
executes at most 12 lines (the worst case is an execution of lines 9, 10,
3, 4, 5, 6, 7, 8, 2, 3, 4, 5; again, by our choice of the
formula $\theta'$, at the last execution of line~5 it is guaranteed
that $z\neq \void$, hence an output tuple is generated).

Therefore, by suitably choosing the function $h$, we obtain that the
algorithm enumerates with delay at most $h(|\varphi|,\e)$.

\medskip
We now show that any tuple outputed is a solution.
To see this consider a tuple $\bar u z$ outputed at step 5. By construction we
have $z\in P(\hat{\G})$ and $\bar u \in \phi(\hat{\G})$. Hence in order to show
that $\bar uz \in \theta(\hat{\G})$, it remains to
verify that $z$ is not connected to any of the elements in $\bar u$.

By definition of $\fun{skip}$ it is clear that $z$ is not connected to the elements
of $V$. Assume now that $z=y$. Then by definition of $V$, $z$ is also not connected
to all the elements not in $V$ and we are done.

We can therefore assume that $z>^P_\G y$. Let $z'$ be the predecessor of $z$ in
$P$ (i.e. $\next(z')=z$, possibly $z'=y$). Assume towards a contradiction that
$z$ is connected to an element $x$ of $\bar u$. From the remark above, we know
that $x \not\in V$. As $z'$ was skipped by $\fun{skip}$ this means that $(z',v)
\in E'({\G})$ for some $v\in V$. Consider $c \in V$. By definition of $V$ we have
$(c,y)\in E_k(\hat{\G})$. It turns out that $(c,y) \in E_{k-1}(\hat{\G})$. This
is because when one $E_j$ does not produce anything outside of $E_{j-1}$ then
all the $E_{j'}$ for $j'>j$ also do not produce anything outside of
$E_{j-1}$. Hence, as $|V|<k$, we must have $(c,y) \in E_{k-1}(\hat{\G})$ and in
particular $(v,y) \in E_{k-1}(\hat{\G})$. Altogether, $z,z',v$ witness the fact that
$(x,y)\in E_k(\hat{\G})$ contradicting the fact that $x$ is not in $V$.

\medskip

It remains to show that all tuples are outputed. {This is done by
  induction.  By induction we know that we consider all relevant $\bar u$. By
  definition, the function $\fun{skip}$ skips only elements $y$ such that
  $\bar u y$ is not a solution. Therefore we eventually output all solutions.}
\end{proof}

Theorem~\ref{thm:enum} follows immediately from Proposition~\ref{prop:enumeration}
using the usual argument: For a given
$\epsilon>0$ we let $\delta\deff \frac{\epsilon}{2h(|\varphi|,\e/2)}$, where $h$ is
the function of Proposition~\ref{prop:enumeration}, and define
$n_\epsilon\deff m_\delta$.  Then, every $\A\in\Class$ with
$|\A|\geq n_\epsilon$ has $\degree(\A)\leq |\A|^{\epsilon/2h(|\varphi|,\e)}$.
Thus, on input of $\A$, $\varphi$ and $\e/2$, the enumeration algorithm from
Proposition~\ref{prop:testing} has preprocessing time
$O(h(|\varphi|,\e/2){\cdot} |\A|^{1+\epsilon})$ if $|\A| \geq n_\epsilon$ and takes
time bounded by $h(|\varphi|,\e/2)\cdot n_\epsilon^{1+\e/2+h(|\varphi|,\e/2)}$
otherwise and delay time $h(|\varphi|,\e)$. This
gives the bounds claimed by the proposition with a computable function as
soon as we can compute $n_\e$.

\section{Proof of  quantifier elimination and normal form}\label{section-main-proofs}
%\section{Proof of Proposition~\ref{prop-quantifier-elim-detailed}}~\label{section-main-proofs}

%%% proofs.tex

This section is devoted to the proof of Proposition~\ref{prop-quantifier-elim-detailed}.
The proof consists of several steps, the first of which relies on a
transformation of $\varphi(\ov{x})$ into an equivalent formula in
Gaifman normal form, i.e., a boolean combination of basic-local sentences
and formulas that are local around $\ov{x}$.
A formula $\lambda(\ov{x})$ is \emph{$r$-local}
around $\ov{x}$ (for some $r\geq 0$) if every quantifier is
relativized to the $r$-neighborhood of $\ov{x}$.
A \emph{basic-local sentence} is
of the form
\[
  \exists y_1\cdots \exists y_\ell \!\!\Und_{1\leq i<j\leq
  \ell}\!\!\! \dist(y_i,y_j)> 2r \ \ \und \ \Und_{i=1}^\ell \theta(y_i),
\]
where
$\theta(y)$ is $r$-local around $y$.
By Gaifman's
well-known theorem we obtain an algorithm that transforms an input
formula $\varphi(\ov{x})$ into an equivalent formula in Gaifman normal
form~\cite{Gaifman-82}.

The rest of the proof can be sketched as follows. Basic-local sentences can be
evaluated on classes of structures of low degree in pseudo-linear time by Theorem~\ref{thm:grohe-low-degree}, so it
remains to treat formulas that are local around their free
variables. By the
Feferman-Vaught Theorem (cf., e.g.~\cite{Makowsky04}), we can further decompose local formulas into formulas that are
local around \emph{one} of their free variables. The latter turns out to have a small
answer set that can be precomputed in pseudo-linear time. The remaining
%preprocessing
time is used to compute the structures useful for reconstructing
the initial answers from their components. We now %spell out
give the details.

\begin{proof}[Proof of Proposition~\ref{prop-quantifier-elim-detailed}]
\ \\
\noindent
\emph{Step~1: transform $\varphi(\ov{x})$ into a local formula $\varphi'(\ov{x})$.}

We first transform $\varphi(\ov{x})$ into an equivalent formula
$\varphi^G(\ov{x})$ in Gaifman normal form.
%By Gaifman's theorem this
%can be done in time $O(h(|\varphi|))$, for a computable function $h$.
%
For each basic-local sentence $\chi$ occurring in
$\varphi^G(\ov{x})$, check whether
$\A\models\chi$ and let $\chi'\deff \true$ if $\A\models\chi$ and
$\chi'\deff \false$ if $\A\not\models\chi$.
Let $\varphi'(\ov{x})$ be the formula obtained from
$\varphi^G(\ov{x})$ by replacing every
basic-local sentence $\chi$ occurring in $\varphi^G(\ov{x})$ with
$\chi'$.
By using Gaifman's theorem and
Theorem~\ref{thm:grohe-low-degree},
all this can be done in time  $O(h(|\varphi|){\cdot}
n{\cdot} d^{{h}(|\varphi|)})$, for a computable function $h$.

Clearly,
for every $\ov{a}\in \dom(\A)^k$ we have
$\A\models\varphi'(\ov{a})$
iff
$\A\models\varphi(\ov{a})$.
%\begin{equation*}
%  \A\models\varphi'(\ov{a})
%  \ \iff \
%  \A\models\varphi(\ov{a}).
%\end{equation*}
Note that there is a number $r\geq 0$ such that $\varphi'(\ov{x})$ is
$r$-local around $\ov{x}$, and this number
is provided as a part of the output of Gaifman's algorithm.

\medskip

\noindent
\emph{Step~2: transform $\varphi'(\ov{x})$ into a disjunction
  $\Oder_{P\in \Part}\psi'_P(\ov{x})$.}

Let $\ov{x}=(x_1,\ldots,x_k)$.
A \emph{partition} of the set $\set{1,\ldots,k}$ is a list
$P=(P_1,\ldots,P_\ell)$ with $1\leq \ell\leq k$ such that

\begin{mi}
    \item
      $\emptyset\neq P_j \subseteq \set{1,\ldots,k}$, for every
      $j\in\set{1,\ldots,\ell}$,

  \item
      $P_1\cup\cdots\cup P_\ell=\set{1,\ldots,k}$,

   \item
      $P_j\cap P_{j'}=\emptyset$, for all $j,j'\in\set{1,\ldots,\ell}$
      with $j\neq j'$,

   \item
      $\min P_j < \min P_{j+1}$, for all $j\in\set{1,\ldots,\ell{-}1}$.
\end{mi}
Let $\Part$ be the set of all partitions of
$\set{1,\ldots,k}$. Clearly, $|\Part|\leq k!$. % chktex 40
   For each $P=(P_1,\ldots,P_\ell)\in\Part$ and each
%   $j\in\set{1,\ldots,\ell}$
   $j\leq \ell$
   let $\ov{x}_{P_j}$ be the tuple obtained
   from $\ov{x}$ by deleting all those $x_i$ with $i\not\in P_j$.

   For every partition $P=(P_1,\ldots,P_\ell)\in\Part$ let
   $\rho_P(\ov{x})$ be an $\FO(\sigma)$-formula stating that each
   of the following is true:

   \begin{enumerate}
    \item
%      $\N^\A_r(\ov{x})$
      The $r$-neighborhood around $\ov{x}$ in $\A$
      is the disjoint union of the
%      $\N^\A_r(\ov{x}_{P_j})$
      $r$-neighborhoods around $\ov{x}_{P_j}$
      for
%      $j\in\set{1,\ldots,\ell}$.
      $j\leq \ell$.
      I.e.,
   \[
\delta_{P}(\ov{x}) \ \deff \
      \Und_{1\leq j<j'\leq \ell}\  \Und_{(i,i')\in P_j\times P_{j'}} \dist(x_i,x_{i'})>2r{+}1.
   \]

    \item
      For each
%     $j\in\set{1,\ldots,\ell}$,
      $j\leq \ell$,
      the $r$-neighborhood around
      $\ov{x}_{P_j}$ in $\A$ is connected, i.e., satisfies the formula
   \[
\gamma_{P_j}(\ov{x}_{P_j}) \ \deff \
      \Oder_{\substack{E \subseteq P_j{\times} P_j
      \text{ such that the} \\ \text{graph }(P_j,E) \text{ is connected}}} \ \Und_{(i,i')\in E}
    \dist(x_i,x_{i'})\leq 2r{+}1.
   \]
   \end{enumerate}
\noindent
   Note that the formula
   \ \(
     \rho_P(\ov{x})  \ \deff \
     \delta_{P}(\ov{x}) \und \Und_{j=1}^\ell\gamma_{P_j}(\ov{x}_{P_j})
   \) \
%   \[
%     \rho_P(\ov{x}) \ \deff \quad
%     \delta_{P}(\ov{x}) \ \und \ \Und_{j=1}^\ell\gamma_{P_j}(\ov{x}_{P_j})
%   \]
   is $r$-local around $\ov{x}$.
 Furthermore,
 $\varphi'(\ov{x})$ obviously is equivalent to the formula
   \ \(
    \Oder_{P\in\Part} \big( \,
      \rho_P(\ov{x}) \und \varphi'(\ov{x})
    \, \big).
   \)
%   \[
%    \Oder_{P\in\Part} \big( \
%      \rho_P(\ov{x}) \und \varphi'(\ov{x})
%    \ \big).
%   \]

% \noindent
  Using the Feferman-Vaught Theorem (see e.g.~\cite{Makowsky04}),
  we can, for each

  $P=(P_1,\ldots,P_\ell)\in\Part$, compute a decomposition of
  $\varphi'(\ov{x})$ into $r$-local formulas $\vartheta_{P,j,t}(\ov{x}_{P_j})$,
  for $j\in\set{1,\ldots,\ell}$ and $t\in T_P$, for a suitable finite
  set $T_P$,
  such that the formula $\big(\rho_P(\ov{x})\und\varphi'(\ov{x})\big)$
  is equivalent to
  \begin{equation*}\label{eq:FefermanVaught-formula}
    \rho_P(\ov{x}) \ \und \ \Oder_{t\in T_P}\ \big( \
      \vartheta_{P,1,t}(\ov{x}_{P_1}) \ \und \ \cdots \ \und \
      \vartheta_{P,\ell,t}(\ov{x}_{P_\ell})
     \ \big)
  \end{equation*}
  which, in turn, is equivalent to
  \,$\psi'_{P} \deff (\psi'_{P,1} \und \psi'_{P,2})$, where
  $\psi'_{P,1}\deff \delta_P(\ov{x})$ and

  \[
   \psi'_{P,2} \ \ \deff\ \
   \Big(\,\Und_{j=1}^\ell\gamma_{P_j}(\ov{x}_{P_j})\,\Big) \ \, \und \ \,
   \Oder_{t\in T_P} \Big(\, \Und_{j=1}^\ell
      \vartheta_{P,j,t}(\ov{x}_{P_j})\,\Big).
  \]

  In summary, $\varphi'(\ov{x})$ is equivalent to
  $
    \Oder_{P\in\Part} \psi'_{P}(\ov{x}),
  $
  and for every tuple $\ov{a}\in \dom(\A)^k$ with
  $\A\models\varphi'(\ov{a})$, there is exactly one partition $P\in\Part$
  such that  $\A\models\psi'_{P}(\ov{a})$ (since $\A \models \rho_P(\ov{a})$ is true for only one such $P\in \Part$).

 \medskip

 \noindent
 \emph{Step~3: defining $\G$, $f$, and $\psi$.}

 We define the domain $G$ of $\G$ to be the disjoint union of the sets
 $A$ and $V$, where $A\deff\dom(\A)$, and $V$
 consists of a ``dummy element'' $\dummy$, and
an element  $v_{(\ov{b},\iota)}$
\begin{itemize}
 \item
  for each  $\ov{b}\in A^1\cup\cdots\cup A^k$ such that
  $\A\models \gamma_{P}(\ov{b})$ where $P\deff\set{1,\ldots,|b|}$
  and
 \item
  for each injective mapping
  $\iota:\set{1,\ldots,|\ov{b}|}\to\set{1,\ldots,k}$.
\end{itemize}
 Note that the first item ensures that the $r$-neighborhood around
 $\ov{b}$ in $\A$ is connected. The second item ensures that we can
 view $\iota$ as a description telling us that the $i$-th component of
 $\ov{b}$ shall be viewed as an assignment for the variable
 $x_{\iota(i)}$ (for each $i\in\set{1,\ldots,|\ov{b}|}$).

 We let $f$ be the function from $A^k$ to $V^k$ defined as follows:
 For each $\ov{a}\in A^k$ let
 $P=(P_1,\ldots,P_\ell)$ be the unique element in $\Part$ such that
 $\A\models\rho_P(\ov{a})$.
 For each $j\leq \set{1,\ldots,\ell}$, we write $\ov{a}_{P_j}$ for
 the tuple obtained from $\ov{a}$ by deleting all those $a_i$ with
 $i\not\in P_j$.
%%%
 Furthermore, we
 let $\iota_{P_j}$ be the
 mapping from $\set{1,\ldots,|P_j|}$ to $\set{1,\ldots,k}$
 such that $\iota(i)$ is the
 $i$-th smallest element of $P_j$, for any $i\in\set{1,\ldots,|P_j|}$.
 Then,
 \[
   f(\ov{a}) \ \deff \
   \big(\, v_{(\ov{a}_{P_1},\iota_{P_1})},\ldots,v_{(\ov{a}_{P_\ell},\iota_{P_\ell})},
   \dummy, \ldots, \dummy \, \big),
 \]
 where the number of $\dummy$-components is $(k{-}\ell)$.
 It is straightforward to see that $f$ is injective.

 We let $\tau_1$ be the signature consisting of a unary relation
 symbol $C_\bot$, and a unary relation symbol $C_\iota$ for each injective
 mapping $\iota:\set{1,\ldots,s}\to\set{1,\ldots,k}$ for
 $s\in\set{1,\ldots,k}$.

 In $\G$, the symbol $C_\bot$ is interpreted by the singleton set
 $\set{\dummy}$, and each
 $C_\iota$ is interpreted by the set of all nodes
 $v_{(\ov{b},\hat{\iota})}\in V$ with $\hat{\iota}=\iota$.

 We let $E$ be a binary relation symbol which
 is interpreted in $\G$
 by the set of all tuples $(v_{(\ov{b},\iota)},v_{(\ov{c},\hat{\iota})})\in V^2$ such that
 there are elements $b'\in A$ in $\ov{b}$ and $c'\in A$ in $\ov{c}$
 such that $\dist^\A(b',c')\leq 2r{+}1$.

 For each $P=(P_1,\ldots,P_\ell)\in\Part$,
 each $j\in\set{1,\ldots,\ell}$, and
 each $t\in T_P$ we let $C_{P,j,t}$ be a unary
 relation symbol which, in $\G$, is interpreted by the set of all
 nodes $v_{(\ov{b},\iota)}\in V$ such that $\iota=\iota_{P_j}$ and
 $\A\models\vartheta_{P,j,t}(\ov{b})$.

 We let $\tau_2$ be the signature consisting of
 all the unary relation symbols $C_{P,j,t}$.

 We let $\ov{y}=(y_1,\ldots,y_k)$ be a tuple of $k$ distinct variables,
 and we define $\psi_1(\ov{y})$ to be the $\FO(E)$-formula
 \[
   \psi_{1}(\ov{y}) \ \ \deff \ \
   \Und_{\substack{ 1\leq j,j'\leq k \\ \text{with }j\neq j'}}\!\!\!\!\!\nicht E(y_j,y_{j'}).
 \]
 For each $P=(P_1,\ldots,P_\ell)$ we let
 $\psi_P(\ov{y})$ be the $\FO(\tau_1\cup\tau_2)$-formula
 defined as follows:
 \begin{eqnarray*}
   \psi_{P}(\ov{y}) & \deff &
     \Big(\Und_{j=1}^\ell C_{\iota_{P_j}}(y_j) \Big) \ \, \und \ \,
     \Big(\Und_{j=\ell+1}^k C_\bot(y_j) \Big) \ \, \und \, \
     \Oder_{t\in T_P}\Big( \Und_{j=1}^\ell C_{P,j,t}(y_j)\Big).
 \end{eqnarray*}
 It is straightforward to verify that the following is true:
 \begin{me}
  \item
   For every $\ov{a}\in A^k$ with $\A\models \psi'_P(\ov{a})$,
   we have \ $\G\models (\psi_1\und\psi_P)(f(\ov{a}))$.
  \item
   For every $\ov{v}\in G^k$ with $\G\models
   (\psi_1\und\psi_P)(\ov{v})$,
   there is a (unique) tuple $\ov{a}\in A^k$ with $\ov{v}=f(\ov{a})$,
   and for this tuple we have $\A\models\psi'_P(\ov{a})$.
 \end{me}
 Finally, we let
 \[
   \psi(\ov{y}) \ \deff  \
     \big(\psi_1(\ov{y}) \und \psi_2(\ov{y})\big)
   \quad
   \text{with}
   \quad
   \psi_2(\ov{y}) \ \deff \
     \Oder_{P\in\Part} \psi_P(\ov{y}).
 \]
 It is straightforward to see that $f$ is a bijection between
 $\varphi(\A)$ and $\psi(\G)$.

 In summary, we now know that items~\ref{item:psi-form:prop-quantifier-elim-detailed}
 and~\ref{item:psi-comput:prop-quantifier-elim-detailed}, as well as the non
 computational part of item~\ref{item:f:prop-quantifier-elim-detailed} of
 Proposition~\ref{prop-quantifier-elim-detailed} are true.

 Later on in \emph{Step~5} we will provide details on how to build a
 data structure
 that, upon input of any tuple $\ov{a}\in A^k$, returns the tuple
 $f(\ov{a})$ within the claimed time bounds.

 In order to be able to also compute $f^{-1}(\ov{v})$ upon input of
 any tuple $\ov{v}\in \psi(\G)$,
 we use additional
 binary relation symbols $F_1,\ldots,F_k$ which are interpreted in
 $\G$ as follows:
 Start by initializing all of them to the empty set.
 Then, for each $v=v_{(\ov{b},\iota)}\in V$ and each
 $j\in\set{1,\ldots,|\ov{b}|}$, add to $F_{\iota(j)}^\G$ the tuple
 $(v,a)$, where $a$ is $j$-th component of $\ov{b}$.
 This completes the definition of $\G$ and $\tau$, letting
 $\tau\deff\tau_1\cup\tau_2\cup\set{E,F_1,\ldots,F_k}$.

 Using the relations $F_1,\ldots,F_k$ of $\G$, in time
 $O(k)$ we can, upon input of $v=v_{(\ov{b},\iota)}\in V$ compute the
 tuple $\ov{b}$ and the mapping $\iota$ (for this, just check for all
 $i\in\set{1,\ldots,k}$ whether node $v$ has an outgoing $F_i$-edge).
 Using this, it is straightforward to see
 that upon input of $\ov{v}\in \psi(\G)$, the tuple
 $f^{-1}(\ov{v})\in A^k$ can be computed in time  $O(k^2)$,

 \medskip

 \noindent
 \emph{Step~4: Computing \G within the time bounds of Item~\ref{item:G:prop-quantifier-elim-detailed}.}

 First of all,
 note that for each $v_{(\ov{b},\iota)}\in V$, the tuple $\ov{b}$ is
 of the form $(b_1,\ldots,b_s)\in A^s$ for some $s\leq k$, such that all
 components of the tuple belong to the $\hat{r}$-neighborhood
 $\N^{\A}_{\hat{r}}(b_1)$ of $b_1$ in
 $\A$, for $\hat{r}\deff k(2r{+}1)$.

By Lemma~\ref{lemma-compute-neigh} we can compute in total time $O(|\varphi|\cdot n
  \cdot d^{h'(|\varphi|)})$ all the neigborhoods
  $\N^{\A}_{\hat{r}}(a)$ (for all $a\in\dom(\A)$), where $h'$
is some computable function (recall that $\hat r$ depends only on
$\varphi$).
Within the same time bound, we can also compute all the neighborhoods
$\N^{\A}_{\tilde{r}}(a)$,
$\N^{\A}_{r'}(a)$, and
$\N^{\A}_{r'-r}(a)$,
for $\tilde{r}\deff \hat{r}{+}r$ and
 $r'\deff r+(2k{+}1)(2r{+}1)$
(later on it will be convenient to have efficient
access to all these neighborhoods).

Thus, the set $V$, along with the relations $C_\bot$, $C_\iota$ and $F_1,\allowbreak\ldots,F_k$ of $\G$,
 can be computed as follows: Start by letting $V\deff \set{\dummy}$
 and initializing all relations to the empty set. Let $C_{\bot}^{\G}\deff\set{\dummy}$.
 Then, for each $a\in A$,
 consider the $\hat{r}$-neighborhood
 $\N^{\A}_{\hat{r}}(a)$ of $a$ in
 $\A$, and compute (by a brute-force algorithm), for each
 $s\in\set{1,\ldots,k}$, the set of all $s$-tuples $\ov{b}$ of
 elements from this neighborhood, which satisfy the following:
 The first component of $\ov{b}$ is $a$, and
 $\N^{\A}_{\hat{r}}(a)\models\gamma_{P_j}(\ov{b})$ for
 $P_j=\set{1,\ldots,s}$.
 For each such tuple $\ov{b}$ do the following:
 For each injective mapping $\iota:\set{1,\ldots,s}\to\set{1,\ldots,k}$
 add to $V$ a new element $v_{(\ov{b},\iota)}$, add this element to
 the relation $C_{\iota}^{\G}$,
 and for each $j\in\set{1,\ldots,s}$, add to $F_{\iota(j)}^{\G}$ the tuple
 $(v_{(\ov{b},\iota)},a)$, where $a$ is the $j$-th component of $\ov{b}$.

 This way, the domain $G=A\cup V$ of $\G$, along with the relations
 $C_\iota$ and
 $F_1,\ldots,F_k$ of $\G$, can be computed in time
 $O(h(|\varphi|){\cdot}n{\cdot}d^{h(|\varphi|)})$, for a computable
 function $h$.

 For computing the unary relations $C_{P,j,t}$ of $\G$, start by initializing all
 of them to the empty set.
 For each $v_{(\ov{b},\iota)}\in V$ do the following:
 Compute (by using the relations $F_1,\ldots,F_k$) the tuple $\ov{b}$
 and the mapping $\iota$.
 Let $a$ be the first component of $\ov{b}$.
 Consider the $\tilde{r}$-neighborhood $\N^{\A}_{\tilde{r}}(a)$ of $a$
 in $\A$, for $\tilde{r}\deff \hat{r}{+}r$.
 For each $P=(P_1,\ldots,P_\ell)\in\Part$, each
 $j\in\set{1,\ldots,\ell}$ such that $\iota_{P_j}=\iota$,
 and each $t\in T_P$, check whether
 $\N^{\A}_{\tilde{r}}(a)\models \theta_{P,t,j}(\ov{b})$. If so, add
 the element $v_{(\ov{b},\iota)}$ to the relation $C_{P,j,t}$ of $\G$.
 (This is correct, since the formula $\theta_{P,j,t}$ is $r$-local
 around its free variables, and the radius of the neighborhood is
 large enough.)

 This way, $\G$'s relations $C_{P,j,t}$ can be computed in time
 $O(h(|\varphi|){\cdot}n{\cdot}d^{h(|\varphi|)})$, for a computable
 function $h$.

 To compute the $E$-relation of $\G$, note that for all tuples
 $(v_{(\ov{b},\iota)},v_{(\ov{c},\hat{\iota})})\in E^{\G}$, we have
 $\dist^{\A}(a,c_j)\leq (2k{+}1)(2r{+}1)$, for all
 components $c_j$ of $\ov{c}$, where $a$ is the first component of $\ov{b}$.
 Thus, the $E$-relation of $\G$ can be computed as follows:
 Start by initializing this relation to the empty set.
 For each $v_{(\ov{b},\iota)}\in V$ do the following:
 Compute (by using the relations $F_1,\ldots,F_k$) the tuple $\ov{b}$.
 Let $a$ be the first component of $\ov{b}$.
 Consider the $r'$-neighborhood $\N^{\A}_{r'}(a)$ of $a$
 in $\A$, for $r'\deff r+(2k{+}1)(2r{+}1)$.
 Use a brute-force algorithm to compute all tuples $\ov{c}$
 of elements in $\N^{\A}_{r'-r}(a)$, such that $|\ov{c}|\leq k$ and
 $\N^{\A}_{r'}(a)\models\gamma_{P_j}(\ov{c})$ for $P_j=\set{1,\ldots,|\ov{c}|}$.
 Check if there are components $b'$ of $\ov{b}$ and $c'$ of $\ov{c}$
 such that $\dist^{\N^{\A}_{r'}(a)}(b',c')\leq 2r{+}1$.
 If so,
 add to $E^{\G}$ the tuple
 $(v_{(\ov{b},\iota)},v_{(\ov{c},\hat{\iota})})$ for each injective
 mapping $\hat{\iota}:\set{1,\ldots,|\ov{c}|}\to\set{1,\ldots,k}$.

 This way, the $E$-relation of $\G$ can be computed in time
 $O(h(|\varphi|){\cdot}n{\cdot}d^{h(|\varphi|)})$, for a computable
 function $h$.

 In summary, we obtain that $\G$ is computable from $\A$ and $\varphi$
 within the desired time  bound.

% To finish the proof of
% item~\ref{item:G:prop-quantifier-elim-detailed}, it suffices to note
% that $\G$ is of degree at most
% $d^{h(|\varphi|)}$ for a computable function $h$ (for details see Appendix~\ref{appendix:qeli:degree-of-G}).

 To finish the proof of
 item~\ref{item:G:prop-quantifier-elim-detailed}, we need to give an
 upper bound on the degree of $\G$.
 As noted above,
 $(v_{(\ov{b},\iota)},v_{(\ov{c},\hat{\iota})})\in E^{\G}$ implies that
 $\dist^{\A}(a,c_j)\leq r'$ for $r'\deff (2k{+}1)(2r{+}1)$, for all
 components $c_j$ of $\ov{c}$, where $a$ is the first component of $\ov{b}$.
 Thus, for each fixed $v_{(\ov{b},\iota)}\in V$, the number of
 elements $v_{(\ov{c},\hat{\iota})}$ such that $(v_{(\ov{b},\iota)},v_{(\ov{c},\hat{\iota})})\in E^\G$
 is at most
 \[
  k! \cdot \sum_{s=1}^k|\N^{\A}_{r'}(a)|^s
  \quad \leq \quad
  k!\cdot |\N^{\A}_{r'}(a)|^{k+1}
  \quad \leq \quad
  k!\cdot d^{(r'+1)(k+1)}.
 \]
 Thus, since $E^\G$ is symmetric, its degree is $\leq 2k! d^{(r'+1)(k+1)}$.

 Similarly, for each tuple $(v_{(\ov{b},\iota)},a)\in F_{i}^{\G}$ (with
 $i\in\set{1,\ldots,k}$) we know that $a$ is the $\iota^{-1}(i)$-th component of
 $\ov{b}$ and each component of $\ov{b}$ belongs to the
 $\hat{r}$-neighborhood of $a$ in $\A$, for $\hat{r}=k(2r{+}1)$.
 Thus, for each fixed $a\in A$, the number of elements $v_{(\ov{b},\iota)}\in V$ such that
 $(v_{(\ov{b},\iota)},a)\in F_i^{\G}$ is at most
 \ $
   k!\cdot\sum_{s=1}^k|\N^{\A}_{\hat{r}}(a)|^s
   \allowbreak
   \ \leq \
   k!\cdot  d^{(\hat{r}+1)(k+1)}.
 $
 In summary, we thus obtain that $\G$ is of degree at most
 $d^{h(|\varphi|)}$ for a computable function $h$.

 \medskip

 \noindent
 \emph{Step~5: Computing $f$ within the time bounds of Item~\ref{item:f:prop-quantifier-elim-detailed}.}

Recall that for $\ov{a}\in A^k$ we have
 \[
   f(\ov{a}) \ \deff \
   \big(\, v_{(\ov{a}_{P_1},\iota_{P_1})},\ldots,v_{(\ov{a}_{P_\ell},\iota_{P_\ell})},
   \dummy, \ldots, \dummy \, \big),
 \]
 for the unique partition $P=(P_1,\ldots,P_\ell)\in\Part$ such that
 $\A\models\rho_P(\ov{a})$. The number of $\dummy$-components in $f(\ov{a})$ is $(k{-}\ell)$.

  We first show how to compute $f(\bar a)$ from $\bar a$ in constant
  time. This is where we use $\e$.

 To compute the partition $P$ for a given tuple $\ov{a}=(a_1,\ldots,a_k)$, we can
 proceed as follows:
 Construct an undirected graph $H$ with vertex set  $\set{1,\ldots,k}$,
   where there is an edge between $i\neq j$ iff
   $\dist^{\A}(a_i,a_j)\leq 2r{+}1$.
This can be done as follows. Let $R$ be the binary relation over $\dom(\A)$
containing all pairs $(a,b)$ such that  $\dist^{\A}(a_i,a_j)\leq 2r{+}1$. As
$\A$ as degree at most $d$, the size of $R$ is bounded by $n {\cdot} d^{2r+2}$ and $R$ can
be computed by a brute-force algorithm in time $O(n \cdot d^{2r+2})$. Hence by
the Storing Theorem (Theorem~\ref{thm-storing-complete}), we can
compute a data structure in time
$O(n^{1+\e} \cdot d^{2r+2})$ such that afterwards we can test in time
depending only on $\e$ whether a given pair is in $R$ or not.

   Once $H$ is computed, we can compute its connected components in time
   depending only on $k$.
   Let $\ell$ be the number of connected components of $H$.
   For each $j\in\set{1,\ldots,\ell}$ let $P_j$ be vertex set of the
   $j$-th connected component, such that $\min P_j < \min P_{j+1}$ for
   all $j\in\set{1,\ldots,\ell{-}1}$.
 After having constructed the partition $P=(P_1,\ldots,P_\ell)$,
 further $O(k^2)$ steps suffice to construct the tuples $\ov{a}_{P_1}$,
 \ldots, $\ov{a}_{P_\ell}$, the mappings
 $\iota_{P_1}$, \ldots, $\iota_{P_\ell}$, and the
 according tuple $f(\ov{a})$.
Let $\zeta_P$ be the function associating to each  pair $(\ov{a}_{P_j},\iota_{P_j})$ the element
$v_{(\bar a_{P_j},\iota_{P_j})}$. The domain of $\zeta_P$ is at most
$n{\cdot} d^{k(2r+1)+1}$ and $\zeta_P$ can be computed in $O(n\cdot
d^{k(2r+1)+1})$ by a brute-force algorithm. Hence, using
Theorem~\ref{thm-storing-complete} we can compute in time $O(n^{1+\e}\cdot
d^{k(2r+1)+1})$ a data structure such that afterwards we can obtain the result of
the function $\zeta_P$ in time depending only on $\e$.

Altogether, after the preprocessing, we can compute $f(\ov{a})$ in time
$O(k^2)$.

Recall that using the relation $F$, it is straightforward to compute $f^{-1}(\ov{v})$
upon input of $\ov{v} \in \dom(\G)$ in time  $O(k^2)$.

This concludes the proof of proof of Proposition~\ref{prop-quantifier-elim-detailed}.
\end{proof}

\section{Conclusion}\label{section-conclusion}

For classes of databases of low degree, we presented an algorithm which
enumerates the answers to first-order queries with constant delay after
pseudo-linear preprocessing. An inspection of the proof shows that the
constants involved are non-elementary in the query size {(this is already the
case for Theorem~\ref{thm:grohe-low-degree} and we build upon this result, this
is also a consequence of Gaifman Normal Form which derives a new
formula of non-elementary size~\cite{DBLP:conf/icalp/DawarGKS07})}.

In
the bounded degree
case the constants are triply exponential in the query size~\cite{KS11}. In the
(unranked) tree case the constants are provably non-elementary~\cite{FrickG04}
(modulo some complexity assumption). We do not know what is the situation for
classes of low degree.

If the database is updated, for instance if a tuple is deleted or inserted, it would be
desirable to be able to update efficiently the data structure that is computed
for deriving in constant time  counting, testing, and enumeration. With the data
structure given in this paper it is not clear how to do this without
recomputing everything from scratch. However it has been shown recently that
there is another data structure, providing the same constant time properties that
furthermore can be updated in time $O(n^\e)$ upon insertion or
deletion of a tuple~\cite{DBLP:journals/corr/abs-2010-02982}.

It would also be interesting to know whether we can enumerate the answers to a
query using the \emph{lexicographical} order (as it is the case over structures
of bounded expansion~\cite{KazanaS13}).

\bibliographystyle{alphaurl}
\bibliography{biblioshort}

\begin{thebibliography}{DGKS07}

\bibitem[Bag06]{Bagan06}
Guillaume Bagan.
\newblock {MSO} queries on tree decomposable structures are computable with
  linear delay.
\newblock In {\em Proc. of Computer Science Logic (CSL'06)}, pages 167--181,
  2006.

\bibitem[BB12]{Brault-Baron12}
Johann Brault-Baron.
\newblock A negative conjunctive query is easy if and only if it is
  beta-acyclic.
\newblock In {\em Proc. of Computer Science Logic (CSL'12)}, 2012.

\bibitem[BDG07]{BDG07}
Guillaume Bagan, Arnaud Durand, and Etienne Grandjean.
\newblock On acyclic conjunctive queries and constant delay enumeration.
\newblock In {\em Proc. of Computer Science Logic (CSL'07)}, pages 208--222,
  2007.

\bibitem[Cou09]{Courcelle09}
Bruno Courcelle.
\newblock Linear delay enumeration and monadic second-order logic.
\newblock {\em Discrete Applied Mathematics}, 157(12):2675--2700, 2009.

\bibitem[DG07]{DurandG07}
Arnaud Durand and Etienne Grandjean.
\newblock First-order queries on structures of bounded degree are computable
  with constant delay.
\newblock {\em ACM Transactions on Computational Logic}, 8(4), 2007.

\bibitem[DGKS07]{DBLP:conf/icalp/DawarGKS07}
Anuj Dawar, Martin Grohe, Stephan Kreutzer, and Nicole Schweikardt.
\newblock Model theory makes formulas large.
\newblock In {\em Automata, Languages and Programming, 34th International
  Colloquium, {ICALP} 2007, Wroclaw, Poland, July 9-13, 2007, Proceedings},
  volume 4596 of {\em Lecture Notes in Computer Science}, pages 913--924.
  Springer, 2007.
\newblock \href {https://doi.org/10.1007/978-3-540-73420-8\_78}
  {\path{doi:10.1007/978-3-540-73420-8\_78}}.

\bibitem[DKT13]{DvorakKralThomas10}
Zdenek Dvor{\'{a}}k, Daniel Kr{\'{a}}l, and Robin Thomas.
\newblock Testing first-order properties for subclasses of sparse graphs.
\newblock {\em J. {ACM}}, 60(5):36:1--36:24, 2013.

\bibitem[DSS14]{DBLP:conf/pods/DurandSS14}
Arnaud Durand, Nicole Schweikardt, and Luc Segoufin.
\newblock Enumerating answers to first-order queries over databases of low
  degree.
\newblock In {\em Proc. Symp. on Principles of Database Systems (PODS'14)},
  2014.
\newblock \href {https://doi.org/10.1145/2594538.2594539}
  {\path{doi:10.1145/2594538.2594539}}.

\bibitem[FG04]{FrickG04}
Markus Frick and Martin Grohe.
\newblock The complexity of first-order and monadic second-order logic
  revisited.
\newblock {\em Annals of Pure and Applied Logic}, 130(1-3):3--31, 2004.

\bibitem[FG06]{FlumGrohe-ParameterizedComplexity}
Jörg Flum and Martin Grohe.
\newblock {\em Parameterized Complexity Theory}.
\newblock Springer-Verlag, 2006.

\bibitem[Gai82]{Gaifman-82}
Haim Gaifman.
\newblock On local and nonlocal properties.
\newblock In J.~Stern, editor, {\em Logic Colloquium'81}, pages 105--135.
  North-Holland, 1982.

\bibitem[GKS17]{GKS13}
Martin Grohe, Stephan Kreutzer, and Sebastian Siebertz.
\newblock Deciding first-order properties of nowhere dense graphs.
\newblock {\em J. {ACM}}, 64(3):17:1--17:32, 2017.
\newblock \href {https://doi.org/10.1145/3051095} {\path{doi:10.1145/3051095}}.

\bibitem[GO04]{GrandjeanOlive04}
Etienne Grandjean and Fr{\'e}d{\'e}ric Olive.
\newblock Graph properties checkable in linear time in the number of vertices.
\newblock {\em Journal of Computer and System Sciences}, 68(3):546--597, 2004.

\bibitem[Gro01]{Grohe-STACS01}
Martin Grohe.
\newblock Generalized model-checking problems for first-order logic.
\newblock In {\em Proc. Symp. on Theoretical Aspects of Computer Science
  (STACS'01)}, 2001.

\bibitem[KD09]{KD09}
Stephan Kreutzer and Anuj Dawar.
\newblock Parameterized complexity of first-order logic.
\newblock {\em Electronic {C}olloquium on {C}omputational {C}omplexity
  ({ECCC})}, 16:131, 2009.

\bibitem[KS11]{KS11}
Wojciech Kazana and Luc Segoufin.
\newblock First-order query evaluation on structures of bounded degree.
\newblock {\em Logical Methods in Computer Science}, 7(2), 2011.

\bibitem[KS13]{WL12}
Wojciech Kazana and Luc Segoufin.
\newblock Enumeration of monadic second-order queries on trees.
\newblock {\em ACM Transactions on Computational Logic}, 14(4), 2013.

\bibitem[KS20]{KazanaS13}
Wojciech Kazana and Luc Segoufin.
\newblock Enumeration of first-order queries on classes of structures with
  bounded expansion.
\newblock {\em Logical Methods in Computer Science}, 16(1), 2020.

\bibitem[Mak04]{Makowsky04}
Johann~A. Makowsky.
\newblock {Algorithmic uses of the Feferman-Vaught Theorem}.
\newblock {\em Annals of Pure and Applied Logic}, 126(1-3):159--213, 2004.

\bibitem[SSV18]{DBLP:conf/pods/SchweikardtSV18}
Nicole Schweikardt, Luc Segoufin, and Alexandre Vigny.
\newblock Enumeration for {FO} queries over nowhere dense graphs.
\newblock In {\em Proc. of Symp. on Principles of Database Systems (PODS)},
  2018.
\newblock \href {https://doi.org/10.1145/3196959.3196971}
  {\path{doi:10.1145/3196959.3196971}}.

\bibitem[SSV20]{DBLP:conf/pods/SchweikardtSV20}
Nicole Schweikardt, Luc Segoufin, and Alexandre Vigny.
\newblock Enumeration for {FO} queries over nowhere dense graphs.
\newblock {\em Journal version of~\cite{DBLP:conf/pods/SchweikardtSV18},
  submitted}, 2020.

\bibitem[Vig20]{DBLP:journals/corr/abs-2010-02982}
Alexandre Vigny.
\newblock Dynamic query evaluation over structures with low degree.
\newblock {\em CoRR}, abs/2010.02982, 2020.
\newblock URL: \url{https://arxiv.org/abs/2010.02982}, \href
  {http://arxiv.org/abs/2010.02982} {\path{arXiv:2010.02982}}.

\bibitem[Yan81]{Yannakakis81}
Mihalis Yannakakis.
\newblock Algorithms for acyclic database schemes.
\newblock In {\em Proc.\ Intl. Conf. on Very Large Data Bases (VLDB'81)}, 1981.

\end{thebibliography}

\end{document}